\DeclareMathOperator*{\argmax}{arg\,max}
\DeclareMathOperator*{\argmin}{arg\,min}
\newcommand{\Prob}[2]{\Pr_{#1}\parens*{#2}}
\newcommand{\ProbCond}[3]{\Pr_{#1}\parens*{#2 \;\middle|\; #3}}
\newcommand{\E}{\mathbb{E}}
\newcommand{\Var}{\textnormal{Var}}
\newcommand{\Exp}[2]{\E_{#1}\bracks*{#2}}
\newcommand{\ExpCond}[3]{\E_{#1}\bracks*{#2 \;\middle|\; #3}}
\newcommand{\AdaptiveSampling}{\textsc{Threshold-Sampling}\xspace}
\newcommand{\AdaptiveSamplingForCover}{\textsc{Threshold-Sampling-For-Cover}\xspace}
\newcommand{\EstimateMean}{\textsc{Reduced-Mean}\xspace}
\newcommand{\Wrapper}{\textsc{Exhaustive-Maximization}\xspace}
\newcommand{\WrapperForCover}{\textsc{Adaptive-Greedy-Cover}\xspace}
\newcommand{\BinarySearch}{\textsc{Binary-Search-Maximization}\xspace}
\newcommand{\SubsamplePreprocess}{\textsc{Subsample-Preprocessing}\xspace}
\newcommand{\Subsample}{\textsc{Subsample-Maximization}\xspace}
\newcommand{\DEF}{\stackrel{\textnormal{\tiny\sffamily def}}{=}}
\newcommand{\R}{\mathbb{R}}
\newcommand{\OPT}{\textnormal{OPT}}
\newcommand{\poly}{\textnormal{poly}}
\newcommand{\ind}{\mathds{1}}
\newcommand{\change}[1]{{\color{black}#1}}
\newcommand{\morteza}[1]{{\color{black}#1}}
 \newcommand{\cD}{\mathcal{D}}
\DeclarePairedDelimiter{\abs}{\lvert}{\rvert}
\DeclarePairedDelimiter{\set}{\{}{\}}
\DeclarePairedDelimiter{\parens}{(}{)}
\DeclarePairedDelimiter{\bracks}{[}{]}
\DeclarePairedDelimiter{\floor}{\lfloor}{\rfloor}
\DeclarePairedDelimiter{\ceil}{\lceil}{\rceil}
\theoremstyle{plain}
\newtheorem{theorem}{Theorem}[section]
\newtheorem{lemma}[theorem]{Lemma}
\newtheorem{corollary}[theorem]{Corollary}
\theoremstyle{definition}
\newtheorem{definition}[theorem]{Definition}
\begin{document}

\title{Submodular Maximization with Nearly Optimal Approximation,
       Adaptivity and Query Complexity}

\author{
Matthew Fahrbach\thanks{
  School of Computer Science, Georgia Institute of Technology.
  Email: \href{mailto:matthew.fahrbach@gatech.edu}{\texttt{matthew.fahrbach@gatech.edu}}.
  Supported in part by a National Science Foundation Graduate Research
  Fellowship under grant DGE-1650044.
  Part of this work was done while the author was a summer intern at
  Google Research, Z\"urich.
}
\and
Vahab Mirrokni\thanks{
  Google~Research, New York. Email: \href{mailto:mirrokni@google.com}{\texttt{mirrokni@google.com}}.
}
\and
Morteza Zadimoghaddam\thanks{
  Google Research, Z\"urich. Email:
  \href{mailto:zadim@google.com}{\texttt{zadim@google.com}}.
}
}
\date{}
\maketitle

\begin{abstract}
Submodular optimization generalizes many classic problems in combinatorial
optimization and has recently found a wide range of applications in machine
learning (e.g., feature engineering and active learning).
For many
large-scale optimization problems, we are often concerned with the adaptivity
complexity of an algorithm, which quantifies the number of sequential rounds
where polynomially-many independent function evaluations can be executed in
parallel.  While low adaptivity is ideal, it is not sufficient for a
distributed algorithm to be efficient, since in many practical applications
of submodular optimization the number of function evaluations becomes
prohibitively expensive.  Motivated by these applications, we study the
adaptivity and query complexity of submodular optimization.

Our main result is a distributed algorithm for maximizing a monotone
submodular function with cardinality constraint $k$ that achieves 
a $(1-1/e-\varepsilon)$-approximation in expectation.
This algorithm runs in $O(\log(n))$
adaptive rounds and makes $O(n)$ calls
to the function evaluation oracle in expectation.
The approximation guarantee and query complexity are optimal,
and the adaptivity is nearly optimal.
Moreover, the number of queries is substantially less than in previous works.
We also extend our results to the submodular cover problem to
demonstrate the generality of our algorithm and techniques.
\end{abstract}

\pagenumbering{gobble}
\clearpage
\pagenumbering{arabic}

\section{Introduction}
\label{sec:introduction}

Submodular functions have the natural property of diminishing returns, making
them prominent in applied fields such as machine learning and data mining.
There has been a surge in applying submodular optimization for data
summarization \cite{tschiatschek2014learning,simon2007scene,sipos2012temporal},
recommendation systems \cite{el2011beyond}, and 
feature selection for learning models~\cite{DK08,KEDNG17},
to name a few applications.  There are also numerous recent works that focus on
maximizing submodular functions from a theoretical perspective. Depending on
the setting where the submodular maximization algorithms are applied, new
challenges emerge and hence more practical algorithms have been designed to
solve the problem in distributed \cite{nips13,MirrokniZadim2015,barbosa2015power},
streaming \cite{badanidiyuru2014streaming}, and robust
\cite{mirzasoleiman2017deletion,mitrovic2017streaming,kazemi2018scalable}
optimization frameworks.  Most of the existing work assumes access to an oracle that
evaluates the submodular function. However, function evaluations (oracle
queries) can take a long time to process---for example, the value
of a set depends on interactions with the entire input like in Exemplar-based
Clustering \cite{dueck2007non} or when the function is computationally hard to evaluate
like the log-determinant of sub-matrices \cite{kazemi2018scalable}.  Although
distributed algorithms for submodular maximization partition the input into
smaller pieces to overcome these problems, each distributed machine may run a
sequential algorithm that must wait for the answers of its past queries before
making its next query. This motivates the study of the adaptivity complexity of
submodular maximization, introduced by Balkanski and Singer~\cite{BS18} to
study the number of rounds needed to interact with the oracle. 
As long as we can ask
polynomially-many queries in parallel, we can ask them altogether in one round
of interaction with the oracle. 

To further motivate this adaptive optimization framework, note that in a wide
range of machine learning optimization problems, the objective function can
only be computed with oracle access to the function. \change{In settings where}
the \change{oracle computation} is a time-consuming optimization problem that
is treated as a black box (e.g., parameter tuning),
it is desirable to optimize a function with minimal number of
rounds of interaction with the oracle.
\change{For example, consider the feature selection problem~\cite{DK08,KEDNG17},
which is a critical step for improving the accuracy of machine learning models.}
The accuracy of a model \change{trained with a subset of features} does
not necessarily have a closed-form formula, and in \change{many settings must be evaluated
by re-training} the model \change{from scratch}.
\change{The training accuracy of certain models (e.g., generalized linear models) is known to be weakly submodular~\cite{DK08,KEDNG17}}.
In this case, we \change{only} have
black-box access to the model accuracy function, \change{and it can be} time-consuming to
compute.
However, the model accuracy of many \change{different feature subsets can be computed independently in parallel}. The adaptive optimization framework~\cite{BS18} is a
realistic model for \change{this type} of distributed problem,
and the insights from lower bounds and \change{algorithms} developed in
this framework have a deep impact on distributed computing for machine learning
applications in practice. For \change{further} motivation on the
importance of round complexity \change{in the} adaptive optimization framework,
we refer the reader to~\cite{BS18}.

While the number of rounds is an important \change{quantity} to optimize, the complexity
of answering oracle queries also motivates designing algorithms that are
efficient in terms of the total number of oracle queries. Typically, we need to
make at least a constant number of queries per element in the ground
set to have a constant approximation guarantee. A fundamental question is how many
queries per element are needed to achieve optimal approximation guarantees without
compromising the minimum number of adaptive rounds.
In this paper, we address this
issue and design a simple algorithm for \change{monotone} submodular maximization subject to
a cardinality constraint that achieves optimal guarantees for the approximation
factor and query complexity.
\change{Our algorithm also achieves nearly-optimal adaptivity complexity
using the lower bound in~\cite{BS18}.}

\subsection{Results and Techniques}
Our main result is a simple distributed algorithm for maximizing a monotone
submodular function with cardinality constraint $k$ that achieves an
expected $(1-1/e-\varepsilon)$-approximation in $O(\log(n)/\varepsilon^2)$ adaptive
rounds and makes $O(n\log(1/\varepsilon)/\varepsilon^3)$ queries to the function
evaluation oracle in expectation.
We emphasize that while our algorithm runs in $O(\log(n)/\varepsilon^2)$
rounds, only a constant number of queries are made per element.  
We note that, due to known lower bounds~\cite{BS18,MBK15}, 
the query complexity of the algorithm is optimal up to factors of
$1/\varepsilon$ and the adaptivity is optimal up to factors of
$1/\log\log(n)$ and $1/\varepsilon$.
To achieve
this result, we develop a number of techniques and subroutines that can be
used in a variety of submodular optimization problems.

First, we develop the algorithm \AdaptiveSampling in \Cref{sec:threshold}, which
returns a subset of items from the ground set in $O(\log(n)/\varepsilon)$
adaptive rounds such that the expected marginal gain of each item in the
solution is at least the input threshold. Furthermore, upon terminating it
guarantees that all unselected items have marginal gain to the returned set
less than the threshold.
This effectively clears out all \change{high-value} items.
To achieve $O(\log(n)/\varepsilon)$ adaptivite complexity, \AdaptiveSampling
adds a random subset of candidate items to its current solution in each
round in such a way that probabilistically filters out an
$\varepsilon$-fraction of the remaining candidates.
We then use \AdaptiveSampling as a subroutine in a submodular maximization
\change{algorithm} that constructs a solution by gradually reducing its threshold for
acceptance.
This \change{algorithm} runs \AdaptiveSampling in parallel starting from many
different initial thresholds,
one of which is guaranteed to be sufficiently closed to the optimal starting
threshold. Consequently, we do not increase the adaptivity complexity because
these processes are independent.
One of the challenges that arises when analyzing the approximation factor of
this algorithm is that \AdaptiveSampling returns \change{a random set} of \change{(possibly)} variable size.
We overcome this by constructing an averaged random process that
agrees with the state of the maximization algorithm at the beginning and end,
but otherwise acts as an intermediate proxy.  In \Cref{sec:maximization}, we demonstrate how
to use \AdaptiveSampling as a subroutine in a greedy maximization \change{algorithm} to
achieve an expected $(1-1/e-\varepsilon)$-approximation to $\OPT$.

Our second main technical contribution is the \SubsamplePreprocess algorithm.
This algorithm iteratively subsamples the ground set and uses the output
guarantees of \AdaptiveSampling to reduce the ratio of the interval containing
$\OPT$ from $k$ to a constant.
The adaptivity complexity of this subroutine is $O(\log(n))$
and its query complexity is $O(n)$.
In particular, we show how to reduce the ratio of the interval
in each step from $R$ to $O(\poly(\log(R))$ by subsampling the ground set and
using a key lemma that relates $\OPT$ to the optimum in the subsampled set.
This approximation guarantee (\Cref{lem:subsample_approx}) for $\OPT$ is a
function of the subsampling rate and may be of independent interest. 
Our \change{ratio-reduction}
technique and the \SubsamplePreprocess algorithm are presented in \Cref{sec:linear-queries}.
Finally, in \Cref{sec:submodular-cover} we show how to use \AdaptiveSampling to
solve the submodular cover problem, \change{demonstrating} that our
techniques are readily applicable to problems beyond submodular maximization
\change{subject to a cardinality constraint.}

\subsection{Related Work}
The problem of optimizing query complexity for maximizing a submodular function
subject to cardinality constraints has been studied extensively.
In fact, a linear-time $(1-1/e-\varepsilon)$-approximation algorithm
called stochastic greedy was recently developed for this problem in
\cite{MBK15}.
We achieve the same optimal query complexity in this paper, combined with
nearly optimal $O(\log(n))$ adaptive round complexity.
The applications of efficient algorithms for submodular
maximization are \change{widespread} due to the numerous applications in machine
learning and data mining.
Submodular maximization has also recently attracted a significant amount of attention
in the streaming and distributed settings~\cite{spaa-LMSV11,KMVV13,nips13,badanidiyuru2014streaming,MirrokniZadim2015,barbosa2015power,alina2,CQ19}.
We note that the distributed MapReduce model and adaptivity framework of
\cite{BS18} are different in that the latter
model does not allow for adaptivity within each round. 
In many previously studied distributed models, such as MapReduce, 
sequential algorithms on a given machine are allowed to be adaptive within one
round for the part of the data \change{that} they are processing locally.
To highlight the difference between these models, Balkanski and
Singer~\cite{BS18} showed that no constant-factor
approximation is achievable in $O(\log(n)/\log\log(n))$ non-adaptive rounds;
however, it is possible to achieve a constant-factor approximation in the
MapReduce model in two rounds~\cite{MirrokniZadim2015}. 

Balkanski and Singer~\cite{BS18} introduced the adaptive framework model for
submodular maximization and showed that a $(1/3)$-approximation is achievable
in $O(\log(n))$ rounds. Furthermore, they showed that
$\Omega(\log(n)/\log\log(n))$ rounds are
necessary for achieving any constant-factor approximation.
They left the problem of achieving the optimal approximation factor of $1-1/e$ open,
and as a followup posted a paper on arXiv achieving a
$(1-1/e-\varepsilon)$-approximation in $O(\log(n))$ rounds~\cite{BRS18}.
Their algorithm, however, requires $O(nk^2)$ queries~\cite{BRS18}.  While
writing this paper, another related  work (on arXiv) was brought to
our attention~\cite{EN18}. While \cite{EN18} has a similar goal to ours and aims
to minimize the number of adaptivity rounds and oracle queries, their
query complexity is $O(n \poly(\log(n)))$,
or $O(\poly(\log(n)))$ calls per element. In contrast,
we present a simple algorithm that achieves optimal query complexity
(i.e., a {\em constant number} of oracle queries per element). 
The query complexity of our algorithm is
optimal up to factors of $1/\varepsilon$.
While we did not aggresively optimize the dependence on $1/\varepsilon$, 
the dependence is better than that in the related works~\cite{BS18,BRS18,EN18}.

\section{Preliminaries}
\label{sec:preliminaries}
For a set function $f : 2^N \rightarrow \R$ and any $S,T \subseteq N$,
let $\Delta(T, S) \DEF f(S \cup T) - f(S)$ be the \emph{marginal gain} of $f$
at $T$ with respect to $S$.
We call $N$ the ground set and let $|N| = n$.
A function $f : 2^N \rightarrow \R$ is \emph{submodular} if for every
$S \subseteq T \subseteq N$ and $x \in N \setminus T$ we have
$\Delta(x, S) \ge \Delta(x, T)$, where we overload the
marginal gain notation for singletons.
A natural class of submodular functions are those which are \emph{monotone},
meaning that for every $S \subseteq T \subseteq N$ we have
$f(S) \le f(T)$.
In the inputs to our algorithms, we let $f_S(T) \DEF \Delta(T,S)$
denote a new submodular function with respect to $S$.
We also assume the ground set is global to all algorithms.
Let $S^*$ be a solution set to the 
maximization problem $\max_{S \subseteq N} f(S)$ subject to
the cardinality constraint $|S| \le k$.
\change{Lastly,} let $\mathcal{U}(A, t)$ denote the uniform distribution over all subsets of $A$
of size $t$. 

Our algorithms take as input an \emph{evaluation oracle} for $f$, which for any
query $S \subseteq N$ returns the value of $f(S)$ in $O(1)$ time.
Given an evaluation oracle, we define the \emph{adaptivity} of
an algorithm to be the minimum number of rounds such that in each
round the algorithm can make polynomially-many independent queries to the
evaluation oracle.
We measure the complexity of our distributed algorithms in
terms of their query and adaptivity complexity.
\change{Finally, we note} that in our runtime guarantees, we
take $1/\delta = \Omega(\poly(n))$
so the claims hold with high probability.

\section{\AdaptiveSampling Algorithm}
\label{sec:threshold}

We start by giving a high-level description of the \AdaptiveSampling algorithm.
For an input threshold $\tau$, the algorithm iteratively builds a solution $S$
and maintains a set of unselected candidate elements~$A$ over
$O(\log(n)/\varepsilon)$ adaptive rounds.  Initially, the solution is empty and
all elements are candidates.
In each round, the algorithm starts by filtering out candidiate elements whose
current marginal gain is less than the threshold.
Then the algorithm efficiently finds the largest set size~$\change{t^* = \min\{\floor{(1 + \hat{\varepsilon})^i}, |A|\}}$, \change{where $i \in \mathbb{Z}_{\ge 0}$ and $\hat{\varepsilon}$ is a small fixed parameter defined in the algorithm,}
such that for
$T \sim U(A, t^*)$ uniformly at random we have
$\E[\Delta(T,S)/|T|] \ge (1-\varepsilon)\tau$.
Next, the algorithm samples $T \sim U(A, t^*)$ and updates the current solution
to $S \cup T$.
This probabilistic guarantee has two beneficial effects:
\change{
\begin{enumerate}
    \item It ensures the average contribution of each element in the output set $S$ is at least $(1-\varepsilon)\tau$.
    \item It implies that an $\varepsilon$-fraction of candidates are filtered out of $A$ in each round in expectation.
\end{enumerate}
}
\noindent
Therefore, the number of remaining elements that the algorithm considers in
each round decreases geometrically in expectation. It follows that
$O(\log(n)/\varepsilon)$ rounds are sufficient to guarantee with high
probability that when the algorithm terminates, we \change{either} have $|S| = k$ or
the marginal gain of all remaining elements \change{to $S$} is below the threshold.

Before presenting \change{the \AdaptiveSampling algorithm}, we define the \change{probability} distribution
$\cD_t$ from which \AdaptiveSampling~\change{draws} samples when estimating the maximum set
size $t^*$ in each round. Sampling from $\cD_t$ can be
\change{implemented} with two calls to the evaluation oracle.

\begin{definition}
\label{def:indicator_distribution}
Conditioned on the current state of the algorithm,
consider the random process where
\change{we sample
$T \sim \mathcal{U}(A, t)$} and then
$x \sim A\setminus T$ uniformly at random.
\change{For all $t \in \{0,1,\dots,|A|-1\}$,}
let $\mathcal{D}_t$ denote the \change{Bernoulli} distribution \change{for} the indicator random variable
\begin{align*}
  I_t~\change{\DEF}~\ind\bracks*{\Delta(x,S\cup T) \ge \tau}.
\end{align*}
\change{For completeness, we define $I_{|A|} = 0$.}
\end{definition}

\noindent
It is useful to think of $\Exp{}{I_t}$ as the probability that the \change{$(t+1)$-st
marginal gain} is at least threshold~$\tau$ if the candidates in $A$ are inserted
into $S$ according to a uniformly random permutation.

Now that $\cD_t$ is defined, we present
the \AdaptiveSampling algorithm and its guarantees below.
Observe that this algorithm calls the \EstimateMean subroutine, which
detects when the mean of $\cD_t$ falls below $1 - \varepsilon$.
We give the exact guarantees of \EstimateMean in \Cref{lem:estimator}.
Relating the mean of $\cD_t$ to threshold values, this means that
after sampling $T \sim U(A,t^*)$ and adding the elements of $T$ to $S$,
the expected marginal gain of the remaining candidates to $S$ is at most
$(1-\varepsilon)\tau$.
This is the invariant we want to maintain in each iteration for an
$O(\log(n/\delta)/\varepsilon)$ adaptive algorithm.
We explain the mechanics of \AdaptiveSampling in detail
and prove \Cref{lem:adaptive_sampling} in
\Cref{sec:analysis_adaptive_sampling}.

\begin{algorithm}[H]
  \caption{\AdaptiveSampling}
  \label{alg:sampling}
  \vspace{0.1cm}
  \textbf{Input:} evaluation oracle for $f : 2^N \rightarrow \change{\R_{\ge 0}}$,
                  constraint $k$, threshold $\tau$,
    error $\varepsilon$, failure probability $\delta$
  \begin{algorithmic}[1]
    \State Set smaller error $\hat{\varepsilon} \leftarrow \varepsilon/3$
    \State Set iteration bounds
        $r \leftarrow \ceil{\log_{(1-\hat\varepsilon)^{-1}}(2n/\delta)}$,
        $m \leftarrow \ceil{\log_{\change{(1+\hat\varepsilon)}}(k)}$
    \State Set smaller failure probability
        $\hat{\delta} \leftarrow \delta/(2r(m+1))$
    \State Initialize $S \leftarrow \emptyset$, $A \leftarrow N$
    \For{$r$ rounds}
      \State Filter $A \leftarrow \{x \in A : \Delta(x,S) \ge \tau\}$
      \If{$\abs{A} = 0$}
        \State \textbf{break}
      \EndIf
      \For{$i=0$ to $m$} 
        \State Set $t \leftarrow \min\{\floor{(1 + \hat{\varepsilon})^i}, |A|\}$
        \If{$\EstimateMean(\mathcal{D}_t, \hat{\varepsilon},
          \hat{\delta} )$} 
          \State \textbf{break}
        \EndIf
      \EndFor
      \State Sample $T \sim \mathcal{U}(A, \min\set{t, k - \abs{S}})$
      \State Update $S \leftarrow S \cup T$
      \If{$\abs{S} = k$}
        \State \textbf{break}
      \EndIf
    \EndFor
    \State \textbf{return} $S$
  \end{algorithmic}
\end{algorithm} 

\begin{restatable}[]{lemma}{adaptiveSampling}\label{lem:adaptive_sampling}
\change{Let $Z$ be the event that all calls to \EstimateMean give correct outputs
(i.e., the reported property in \Cref{lem:estimator} holds).}
\change{For any monotone,\footnote{\change{In the subsequent work~\cite{fahrbach2019non},
the authors modify~\AdaptiveSampling to give guarantees for non-monotone submodular functions.}}
nonnegative submodular function~$f$,}
\AdaptiveSampling outputs $S \subseteq N$ with $|S| \le k$ in
\change{$O(\log(n/\delta)/\log(1/(1-\varepsilon)))$ adaptive rounds (for small $\varepsilon$, this becomes $O(\log(n/\delta)/\varepsilon)$)} such that
the following properties hold conditioned on $Z$:
  \begin{enumerate}
    \item \change{The algorithm makes} $O(n/\varepsilon)$ oracle queries in expectation.
    \item The \change{average marginal gain satisfies} $\Exp{}{f(S)/|S|} \ge (1-\varepsilon)\tau$.
    \item
    With probability at least $1-\delta/2$,
    if $|S| < k$,
    then $\Delta(x, S) < \tau$ for all $x \in N$.
  \end{enumerate}
\change{Further, event $Z$ happens with probability at least $1 - \delta/2$.}
\end{restatable}

\begin{algorithm}[H]
  \caption{\EstimateMean}
  \label{alg:estimate_mean}
  \vspace{0.1cm}
  \textbf{Input:} Bernoulli distribution $\mathcal{D}$,
    error $\varepsilon$, failure probability $\delta$
  \begin{algorithmic}[1]
    \State Set number of samples
      $m \leftarrow 16 \ceil{\log(2/\delta)/\varepsilon^2}$
    \State Sample $X_1, X_2,\dots, X_m \sim \mathcal{D}$
    \State Set $\overline{\mu} \leftarrow \frac{1}{m} \sum_{i=1}^m X_i$
    \If{$\overline{\mu} \le 1 - 1.5\varepsilon$}
      \State \textbf{return} \texttt{true}
    \EndIf
    \State \textbf{return} \texttt{false}
  \end{algorithmic}
\end{algorithm} 

\begin{restatable}[]{lemma}{estimator}\label{lem:estimator}
For any Bernoulli distribution $\mathcal{D}$,
\EstimateMean uses $O(\log(\delta^{-1})/\varepsilon^{2})$
samples to \change{report one of the following properties,
which is correct with probability at least $1-\delta$:}
\begin{enumerate}
  \item If the output is \textnormal{\texttt{true}}, then the mean of $\mathcal{D}$ is $\mu \le 1 - \varepsilon$.
  \item If the output is \textnormal{\texttt{false}}, then the mean of $\mathcal{D}$ is $\mu \ge 1 - 2\varepsilon$.
\end{enumerate}
\end{restatable}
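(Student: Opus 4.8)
The plan is to reduce both assertions to a single concentration inequality for the empirical mean $\overline{\mu} = \frac1m\sum_{i=1}^m X_i$ of the $m$ independent samples $X_1,\dots,X_m \sim \mathcal D$. First observe that the subroutine can only be wrong in two ways: it outputs \texttt{true} while the true mean $\mu$ of $\mathcal D$ satisfies $\mu > 1-\varepsilon$, or it outputs \texttt{false} while $\mu < 1-2\varepsilon$. (If $\mu \in [1-2\varepsilon,\,1-\varepsilon]$, both outputs are consistent with the stated properties, so there is nothing to prove.) These two bad events are mutually exclusive for a fixed $\mathcal D$, so it is enough to bound the probability of each by $\delta$.

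Consider the first bad event. The algorithm returns \texttt{true} precisely when $\overline\mu \le 1 - 1.5\varepsilon$. If $\mu > 1-\varepsilon$ then $1-1.5\varepsilon < \mu - \varepsilon/2$, so this event is contained in $\{\overline\mu \le \mu - \varepsilon/2\}$, and Hoeffding's inequality for $[0,1]$-valued random variables gives $\Pr[\overline\mu \le \mu - \varepsilon/2] \le \exp(-m\varepsilon^2/2)$. The second bad event is symmetric: the algorithm returns \texttt{false} when $\overline\mu > 1-1.5\varepsilon$, and if $\mu < 1-2\varepsilon$ then $1-1.5\varepsilon > \mu + \varepsilon/2$, so the event lies inside $\{\overline\mu \ge \mu + \varepsilon/2\}$, whose probability is again at most $\exp(-m\varepsilon^2/2)$ by the upper-tail Hoeffding bound.

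Finally I plug in $m = 16\lceil \log(2/\delta)/\varepsilon^2\rceil$, which gives $m\varepsilon^2/2 \ge 8\log(2/\delta) \ge \log(1/\delta)$ and hence $\exp(-m\varepsilon^2/2) \le \delta$ for each bad event, with room to spare. The sample bound $m = O(\log(\delta^{-1})/\varepsilon^2)$ is immediate. I do not expect a real obstacle; the only point that needs care is the elementary bookkeeping that the decision threshold $1-1.5\varepsilon$ sits exactly midway between the target means $1-\varepsilon$ and $1-2\varepsilon$, which is what yields a clean margin of $\varepsilon/2$ on both sides. If one wants to invoke a multiplicative Chernoff bound instead of Hoeffding (matching the phrasing of the surrounding text), the same argument works after using $\mu \le 1$ to turn the additive margin $\varepsilon/2$ into a relative deviation of at least $\varepsilon/2$ and absorbing the constant into the choice of $m$.
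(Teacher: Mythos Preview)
Your proof is correct and follows essentially the same structure as the paper's: both reduce correctness to showing $|\overline\mu - \mu| < \varepsilon/2$ with probability at least $1-\delta$, exploiting that the decision threshold $1-1.5\varepsilon$ sits midway between $1-\varepsilon$ and $1-2\varepsilon$. The only difference is that you invoke Hoeffding's inequality directly, whereas the paper applies the multiplicative Chernoff bounds of \Cref{lem:chernoff} and performs a short case analysis on the exponents; your route is slightly cleaner and avoids that case split.
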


We briefly remark that the \EstimateMean subroutine is a standard unbiased
estimator for the mean of a Bernoulli distribution.
Since $\cD_t$ is a uniform distribution over indicator random
variables, it is a Bernoulli distribution.
The guarantees of in \Cref{lem:estimator} are consequences of Chernoff bounds
and the proof of \Cref{lem:estimator} is given
in \Cref{app:estimator_analysis}.

\subsection{Analysis of \AdaptiveSampling Algorithm}
\label{sec:analysis_adaptive_sampling}

To prove the guarantees of \AdaptiveSampling~\change{in \Cref{lem:adaptive_sampling}},
we first \change{show that $\cD_t$ has monotonic behavior} at any
point in the algorithm.
This is a simple consequence of submodularity, and
the proof can be found in \Cref{app:nonincreasing_proof}.

\begin{restatable}[]{lemma}{percentageNonincreasing}
\label{lem:percentage_nonincreasing}
In each round of \AdaptiveSampling,
we have $\change{1} = \E[I_\change{0}] \ge \E[I_\change{1}] \ge \dots \ge \E[I_{|A|}]$.
\end{restatable}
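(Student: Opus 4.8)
The plan is to show that $\E[I_t] \ge \E[I_{t+1}]$ for each $1 \le t \le |A|-1$ by coupling the two sampling experiments that define $\cD_t$ and $\cD_{t+1}$. Recall that $I_t = \ind[\Delta(x, S \cup T) \ge \tau]$ where $T \sim \cU(A, t-1)$ and then $x \sim A \setminus T$, and similarly $I_{t+1}$ uses a set $T'$ of size $t$. First I would build the following joint distribution: draw an ordered sequence $(y_1, \dots, y_{t+1})$ of distinct elements from $A$ uniformly at random (i.e., a uniformly random injection from $[t+1]$ into $A$). Set $T = \{y_1, \dots, y_{t-1}\}$ with $x = y_t$ for the size-$t$ experiment, and set $T' = \{y_1, \dots, y_t\}$ with $x' = y_{t+1}$ for the size-$(t+1)$ experiment. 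One checks that the marginals of this coupling are exactly $\cD_t$ and $\cD_{t+1}$ respectively, since a uniformly random ordered tuple induces the uniform distribution on its unordered prefix and on the next element given that prefix.

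Next I would compare the two indicators pointwise under this coupling. We have $T \subseteq T'$ by construction (indeed $T' = T \cup \{x\}$), hence $S \cup T \subseteq S \cup T'$. Submodularity of $f$ then gives $\Delta(x', S \cup T) \ge \Delta(x', S \cup T')$ for the element $x' = y_{t+1}$, which is not in $T'$. This by itself is not quite what I want, because the size-$t$ indicator evaluates the marginal of $x = y_t$, not of $x' = y_{t+1}$. The fix is a symmetry observation: conditioned on the unordered set $\{y_1, \dots, y_{t+1}\}$ and on the prefix $\{y_1,\dots,y_{t-1}\}$, the two remaining elements $y_t$ and $y_{t+1}$ are exchangeable, so $\E[\ind[\Delta(y_t, S \cup T) \ge \tau]] = \E[\ind[\Delta(y_{t+1}, S \cup T) \ge \tau]]$. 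Therefore
\begin{align*}
  \E[I_t]
  = \E\bigl[\ind[\Delta(y_{t+1}, S \cup T) \ge \tau]\bigr]
  \ge \E\bigl[\ind[\Delta(y_{t+1}, S \cup T') \ge \tau]\bigr]
  = \E[I_{t+1}],
\end{align*}
where the inequality is monotonicity of $\ind[\,\cdot \ge \tau]$ applied to the submodularity inequality $\Delta(y_{t+1}, S \cup T) \ge \Delta(y_{t+1}, S \cup T')$, which holds for every realization of the coupling. Chaining this across $t = 1, \dots, |A|-1$ yields the claimed chain of inequalities.

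The main obstacle — really the only subtle point — is getting the coupling and the exchangeability step stated cleanly, since the naive pointwise comparison compares marginals of two different elements and does not immediately go through. Everything is conditioned on "the current state of the algorithm" (i.e., on $S$ and $A$), exactly as in \Cref{def:indicator_distribution}, so no independence beyond that is needed. I would present the argument at the level of the random tuple $(y_1,\dots,y_{t+1})$ and relegate the routine verification that its prefixes have the right marginal laws to a sentence. The rest is immediate from submodularity and the monotonicity of the threshold indicator.
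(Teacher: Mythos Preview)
Your proposal is correct and is essentially the same argument as the paper's. The paper writes out $\E[I_{t+1}]$ as an explicit average over ordered tuples $(x_1,\dots,x_{t+1})$, applies submodularity to drop $x_t$ from the conditioning set, and then uses the same symmetry/exchangeability observation you make to relabel $x_{t+1}$ as $x_t$; your coupling via a uniformly random injection $(y_1,\dots,y_{t+1})$ is just the probabilistic phrasing of that same sum.
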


Now we show that if we choose the maximum set size $t^*$ in each round such
that the average marginal gain of a randomly sampled subset of size $t^*$ is at
least $(1-\varepsilon)\tau$, then we expect to filter an $\varepsilon$-fraction
of the remaining candidates in the subsequent round.
In \Cref{lem:filters-all} we show that our choice of the number of rounds
is sufficient to guarantee that
all unchosen elements have marginal gain less than $\tau$ with high probability.

\begin{lemma}
\label{lem:filter}
\change{Conditioned on event $Z$,
in each round of \AdaptiveSampling,
we expect to filter out an $\hat{\varepsilon}$-fraction of the elements in $A$.}
\end{lemma}

\begin{proof}
This is a consequence of our choice of
$t^* = \min\{t, k-|S|\}$ when sampling $T$.
If $t^* = k - |S|$ \change{or $t^* = |A|$,} then the algorithm breaks from the loop and there is
no subsequent filtering.
Otherwise, for any given round,
we condition on the state of the algorithm.
Let $A_i$ \change{be} the value of $A$ after the filtering step in the \change{$i$-th}
round, and let $A_{i+1}$ be the random variable for the future value of $A_i$
after being filtered in the next round.
The algorithm draws $T \sim \mathcal{U}(A_{i},t^*)$
uniformly at random, so by the process in
\Cref{def:indicator_distribution},
filtering has the property that for $x \sim A_{i} \setminus T$,
\begin{align*}
  \Exp{}{I_{\change{t^*}}}
  &= \Prob{}{\Delta(x, S\cup T) \ge \tau}
  = \Exp{}{\frac{\abs*{A_{i+1}}}{\abs*{A_{i} \setminus T}}}.
\end{align*}
\change{We have $\E\bracks{I_{t^*}} \le 1 - \hat{\varepsilon}$} by \Cref{lem:estimator}.
\change{Since} $|A_{i} \setminus T| \le |A_{i}|$ for all choices of $T$,
\change{it follows that}
\[
  \Exp{}{\abs*{A_{i+1}}} 
  \le \parens*{1-\hat\varepsilon} \cdot \abs*{A_{i}}.
\]
\change{Therefore,}
an expected $\hat\varepsilon$-fraction of elements are filtered out in each round.
\end{proof}

\begin{lemma}
\label{lem:filters-all}
\change{Conditioned on event $Z$,
if \AdaptiveSampling terminates with $|S| < k$, then
we have $|A| = 0$ with probability at least $1 - \delta/2$.
}
\end{lemma}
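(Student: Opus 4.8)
The plan is to show that the candidate set contracts by a factor of $1-\hat\varepsilon$ in expectation each round, so that after the prescribed $r = \ceil{\log_{(1-\hat\varepsilon)^{-1}}(2n/\delta)}$ rounds no element with marginal gain at least $\tau$ can remain. Throughout I read ``$|A| = 0$ at termination'' as the assertion that $\{x \in N : \Delta(x,S) \ge \tau\} = \emptyset$ for the returned set $S$ --- i.e.\ property~3 of \Cref{lem:adaptive_sampling} --- which is exactly what one additional filtering pass would certify.

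First I would dispose of the trivial cases: if \AdaptiveSampling breaks because a filtering step produces $|A| = 0$, there is nothing to prove, and if it breaks because $|S| = k$, the hypothesis is vacuous; so I may assume the algorithm completes all $r$ rounds and ends with $|S| < k$. Let $E$ be the event that every invocation of \EstimateMean throughout the run satisfies the guarantee of \Cref{lem:estimator}. There are at most $r(m+1)$ such invocations, each failing with probability at most $\hat\delta = \delta/(2r(m+1))$, so a union bound gives $\Pr[\overline E] \le \delta/2$.

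Next, working on the event $E$, I would establish the per-round contraction $\E[\,|A_{j+1}| \mid \cF_j\,] \le (1-\hat\varepsilon)\,|A_j|$, where $A_j$ is the post-filter candidate set in round $j$, $\cF_j$ is the state entering round $j$, and by convention $A_j = \emptyset$ once the algorithm has stopped, so that the bound is trivial whenever round $j$ breaks. This is essentially the computation already carried out in the proof of \Cref{lem:filter}: conditioned on $\cF_j$ and on $E$, either the inner loop exits early with \EstimateMean returning \texttt{true}, in which case $\E[I_t \mid \cF_j] = \mu(\cD_t) \le 1-\hat\varepsilon$ by \Cref{lem:estimator}, and then drawing $x \sim A_j \setminus T$, invoking \Cref{lem:percentage_nonincreasing}, and using $|A_j\setminus T| \le |A_j|$ gives the bound; or the inner loop runs all the way to $i = m$, in which case the choice $m = \ceil{\log(k)/\hat\varepsilon}$ makes $\floor{(1+\hat\varepsilon)^m}$ at least $k$, so the sampled set $T$ either drives $|S|$ to $k$ (a break) or equals all of $A_j$, and in the latter case the next filter empties $A$. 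Telescoping from $|A_1| \le n$ across the $r$ rounds and one final (hypothetical) filter then gives $\E[\,|A_{r+1}| \mid E\,] \le (1-\hat\varepsilon)^{r} n$.

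Finally, the choice of $r$ gives $(1-\hat\varepsilon)^{r} n \le \delta/2$, so Markov's inequality yields $\Pr[\,|A_{r+1}| \ge 1 \mid E\,] \le \delta/2$, whence $\Pr[\,|A_{r+1}| \ge 1\,] \le \delta/2 + \Pr[\overline E] \le \delta$; on the complementary event no element of $N \setminus S$ has marginal gain at least $\tau$, which is the desired conclusion. The step I expect to require the most care is the conditioning bookkeeping: \Cref{lem:filter} is itself a high-probability statement about a conditional expectation, so I must check that conditioning on the global event $E$ does not disturb the round-by-round contraction, and that the two corner cases --- a round that raises $|S|$ to $k$, and a round whose sample $T$ absorbs the entire remaining candidate set --- only ever help; the remaining ingredients (a union bound, a geometric product, and Markov's inequality) are routine.
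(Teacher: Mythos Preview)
Your proposal is correct and follows essentially the same route as the paper: union-bound over all \EstimateMean calls to secure one $\delta/2$, invoke \Cref{lem:filter} to obtain the per-round contraction $\E[|A_{j+1}|] \le (1-\hat\varepsilon)\,\E[|A_j|]$, telescope over the $r$ rounds, and apply Markov's inequality to spend the remaining $\delta/2$. If anything you are more explicit than the paper about the corner case where the inner loop never breaks and about the interaction between the global conditioning on $E$ and the per-round contraction, both of which the paper handles informally.
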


\begin{proof}
Denote by $A_i$ the random variable for the value of $A$ after it is filtered
in the $i$-th round of \AdaptiveSampling, and
\change{recall} that $A_0 = N$.
\Cref{lem:filter} \change{gives us $\E[|A_{i+1}|] \le (1-\hat\varepsilon) \cdot  \E[|A_{i}|]$,
which implies}
\[
  \Exp{}{\abs*{A_{r}}} \le \parens*{1 - \hat\varepsilon}^r \cdot \Exp{}{\abs*{A_{0}}}
  = \parens*{1 - \hat\varepsilon}^r n.
\]
Therefore, by Markov's inequality and our choice of the number of
rounds $r$, we have
\begin{align*}
  \Prob{}{\abs*{A_r} \ge 1}
  \le \parens*{1 - \hat\varepsilon}^{\log_{(1-\hat\varepsilon)^{-1}}(2n/\delta)} n
  = \delta/2.
\end{align*}
It follows that
$\Prob{}{|A_{r}| = 0} = 1 - \Prob{}{|A_{r}| \ge 1} \ge 1 - \delta/2$,
\change{as desired, conditioned on event $Z$.}
\end{proof}

Using the guarantees for \EstimateMean and
the two lemmas above, we can prove \Cref{lem:adaptive_sampling}.

\begin{proof}[Proof of \Cref{lem:adaptive_sampling}]
We start by showing that the adaptivity complexity of \AdaptiveSampling is
\change{always} $O(\log(n/\delta)/\varepsilon)$.
By construction,
the number of rounds is $O(\log_{(1-\varepsilon)^{-1}}(n/\delta))$,
\change{and in each round the algorithm makes polynomially-many queries},
all of which are independent
and \change{only} rely on the current state of $S$.

\change{Assume that event $Z$ is true.}
For Property 1, the total number of oracle queries incurred from calling
\EstimateMean is
  $O\parens{rm\log\parens*{\delta^{-1}}/\varepsilon^{2}}
  = 
  O\parens{\log\parens*{n/\delta}\log\parens*{k}\log\parens*{\delta^{-1}}/\varepsilon^4}$
by \Cref{lem:estimator}.
Note that we can sample from $\mathcal{D}_t$ with two oracle calls.
Now we bound the \change{total} expected number of queries made while filtering over
the course of the algorithm. Let~$A_i$ be a random variable for the value
of $A$ in the $i$-th round.
It follows from the \change{inequality}
$\E[|A_{i+1}|] \le (1 - \hat\varepsilon) \cdot \E[|A_i|]$ in
the proof of \Cref{lem:filter} and the linearity of expectation
that the expected number of queries is bounded by
\begin{align*}
  \Exp{}{\sum_{i=0}^{\change{r-1}} \abs*{A_{i}}} = \sum_{i=0}^{\change{r-1}} \Exp{}{\abs*{A_i}}
  \le n \sum_{i=0}^{\change{r-1}} \parens*{1 - \hat\varepsilon}^i
  \le n/\hat\varepsilon.
\end{align*}
Since we set $\delta^{-1} = O(\poly(n))$, the number of expected
queries made when filtering dominates the sum of queries made when calling
\EstimateMean.

For \change{P}roperty 2,
it suffices to lower bound the expected marginal \change{gain}
of every element added to~$S$
if we think of adding each set $T$ to the output $S$
one element at a time according to a uniformly random permutation.
Let $t^* = \min\{t, k-|S|\}$ be the size of $T$ \change{in} an arbitrary
round.
If $t^* = 1$, then $\E[\Delta(T,S)] \ge \tau$ by the definition of $A$.
Otherwise, the size \change{$t \le t^*$ in the previous step, where
$t = \floor{(1+\hat\varepsilon)^{i-1}} \le \floor{(1+\hat\varepsilon)^{i}} = t^*$,}
satisfies $\E[I_t] \ge 1 - 2\hat\varepsilon$ \change{since the algorithm
did not break on Line~11.}
\change{Then,}
since $T \sim \mathcal{U}(A,t^*)$ is sampled uniformly at random,
we can lower bound $\E[\Delta(T,S)]$ by the contribution of the
first \change{$\min\{t+1,t^*\}$} elements, giving us
\begin{align}
  \Exp{}{\Delta(T,S)} &\ge
  \parens*{\Exp{}{I_\change{0}} + \Exp{}{I_\change{1}} + \dots + \Exp{}{I_{\change{t^*-1}}}}\tau \label{eqn:property_2_ineq_1}\\
  &\ge \change{(\min\{t+1,t^*\})} (1-2\hat\varepsilon) \tau \label{eqn:property_2_ineq_2}\\
  &\ge \frac{t^*}{1 + \hat\varepsilon} \cdot (1-2\hat\varepsilon) \tau \label{eqn:property_2_ineq_3}\\
  &\ge t^* (1-\varepsilon) \tau. \notag
\end{align}
\change{Inequality~\Cref{eqn:property_2_ineq_1}}
uses the definition of $I_t$
and is \change{essentially} Markov's inequality.
\change{Note that \Cref{eqn:property_2_ineq_1} focuses only on the elements for which we achieve a marginal gain of at least $\tau$.
For the rest of the elements in $T$, we use the monotonicity of function $f$
and argue that their marginal gains are nonnegative.}
\change{Inequality~\Cref{eqn:property_2_ineq_2} uses \Cref{lem:percentage_nonincreasing}
and the fact that $\E[I_t] \ge 1 - 2\hat\varepsilon$.
Inequality~\Cref{eqn:property_2_ineq_3}
uses the observation that $t^* \le (1+\hat\varepsilon)(t+1)$.}
\change{Thus, Property 2 follows since the expected marginal gain
of any individual element in $T$ is at least $(1-\varepsilon)\tau$.}

\change{Property 3 follows from
\Cref{lem:filters-all}, the definition of $A$, and submodularity.}

\change{
Finally, it remains to show that event $Z$ happens with probability at least $1-\delta/2$.
Recall that each call to \EstimateMean succeeds with probability at least
$1 - \hat{\delta}$ by \Cref{lem:estimator}.
Therefore, by a union bound, all $r(m+1)$
calls succeed with probability at least 
$1-r(m+1)\hat{\delta} = 1 - \delta/2$.
}
\end{proof}

\vspace{-0.33cm}
\section{\Wrapper Algorithm}
\label{sec:maximization}

In this section we show how \AdaptiveSampling fits into a greedy framework
for maximizing monotone submodular functions \change{subject to} a cardinality constraint.
We start by presenting the \Wrapper algorithm, and then we prove its
guarantees in \Cref{sec:wrapper_analysis}.
The algorithm \Wrapper works as follows.
Given an initial threshold~$\tau$, \Wrapper constructs a solution
by repeatedly running \AdaptiveSampling 
at decreasing thresholds $(1-\varepsilon)^j \tau$ conditioned on the
current partial solution.
It is greedy in the sense that every time \AdaptiveSampling is called,
the expected average contribution of the elements in the returned set
is at least $(1-\varepsilon)\tau$ and
the marginal gain of all remaining elements
is less than the current threshold.
These properties allow us to prove an approximation guarantee with respect to
the initial threshold~$\tau$.

To relate the quality of the solution to $\OPT$, we first let
$\Delta^* = \max\{f(x) : x \in N\}$ be an upper bound for all marginal
contributions by submodularity and
observe that $\Delta^* \le \OPT \le k\Delta^*$.
The threshold that \Wrapper searches for is $\tau^* = \OPT/k$, so
it suffices to run the greedy thresholding algorithm for $O(\log(k)/\varepsilon)$
initial thresholds $(1+\varepsilon)^i \Delta^*/k$ in parallel
and return the solution with maximum value.
Since the algorithm will try some threshold $\tau$
close enough to~$\tau^*$, specifically $\tau \le \tau^* \le (1+\varepsilon)\tau$,
the approximation to $\OPT$ follows.
Note that by trying all thresholds in parallel, the adaptivity complexity
of the algorithm does not increase.
In \Cref{sec:linear-queries} we present efficient preprocessing methods
\change{to reduce} the ratio of the interval containing $\OPT$.

\begin{algorithm}[H]
  \caption{\Wrapper}
  \label{alg:wrapper}
  \textbf{Input:} evaluation oracle for $f : 2^N \rightarrow \change{\R_{\ge 0}}$, constraint $k$, error $\varepsilon$, failure probability $\delta$
  \begin{algorithmic}[1]
    \State Set upper bounds $\Delta^* \leftarrow \argmax\{f(x) : x \in N\}$,
    $r \leftarrow \ceil{2\log(k)/\varepsilon}$,
    $m \leftarrow \ceil{\log(4)/\varepsilon}$
    \State Set smaller failure probability $\hat{\delta} \leftarrow \delta/(r(m+1))$
    \State Initialize $R \leftarrow \emptyset$
    \For{$i=0$ to $r$ in parallel}
      \State Set $\tau \leftarrow (1+\varepsilon)^i \Delta^*/k$
      \State Initialize $S \leftarrow \emptyset$
      \For{$j=0$ to $m$} \Comment{Until $(1-\varepsilon)^j\tau < \tau/4$}
        \State Set $T \leftarrow
    \AdaptiveSampling(f_S$,
    ${k - |S|}$, $(1-\varepsilon)^j \tau, \varepsilon, \hat{\delta})$
        \State Update $S \leftarrow S \cup T$
        \If{$|S|=k$}
          \State \textbf{break}
        \EndIf
      \EndFor
    \If{$f(S) > f(R)$}
      \State Update $R \leftarrow S$
    \EndIf
    \EndFor
    \State \textbf{return} $R$
  \end{algorithmic}
\end{algorithm} 

\begin{restatable}[]{theorem}{approx}\label{thm:approx_factor}
For \change{a} monotone, nonnegative submodular function $f$,
\Wrapper outputs a set $S \subseteq N$ with $|S| \le k$ in
$O(\log(n/\delta)/\varepsilon^{2})$ adaptive rounds 
\change{and with $O(n \log(k)/\varepsilon^3 + \delta n^2)$ oracle queries in expectation 
such that $\E[f(S)] \ge (1 - 1/e - \varepsilon)(1-\delta)\OPT$.
Setting $\delta < 1/n$ yields a good trade-off 
between the number of adaptive rounds  and oracle calls.}
\end{restatable}

\subsection{Analysis of \Wrapper Algorithm}
\label{sec:wrapper_analysis}

To analyze the expected approximation factor of \Wrapper, we first assume
that all calls to \AdaptiveSampling give correct outputs by our choice of
$\hat{\delta}$ and a union bound
\change{(i.e., event $Z$ in \Cref{lem:adaptive_sampling} always holds).}
The analysis is for one execution of the block in the
for loop of \Wrapper (Line~5 to Line 13),
\change{and it} assumes the initial value of $\tau$ is
sufficiently close to $\tau^* = \OPT/k$, satisfying $\tau \le \tau^* \le (1+\varepsilon)\tau$.
Furthermore, we assume that
the final output set is of size~$k$.
We refer to this modified block of \Wrapper as the algorithm.

For a fixed input, as the algorithm runs it produces nonempty sets
$T_1,T_2,\dots,T_m$, inducing a probability distribution over
sequences of subsets.
Denote their respective sizes by $t_1,t_2,\dots,t_m$ and the input values
$(1-\varepsilon)^j \tau$
for which they were returned by $\tau_1, \tau_2,\dots,\tau_m$.
We view the algorithm
as a random process that adds elements to the output set $S$ one at a time
instead of set by set. Specifically, for each new $T_i$ the algorithm adds each $x \in
T_i$ to $S$ in lexicographic order, producing a sequence of subsets
$S_0 \subseteq S_1 \subseteq \dots \subseteq S_k$ with $S_0 = \emptyset$.
Note that there is no randomness
in adding the elements of $T_i$ once $T_i$ is drawn.

Instead of analyzing the expected value of the partial solution $\E[f(S_i)]$ at
each step, we consider an averaged version of this random process that is
easier to analyze and whose final expected value is equal to $\E[f(S_k)]$.
In particular, when the process draws a set $T_\ell$,
each element $x \in T_\ell$ contributes the same amount
$\Delta(T_\ell, T_1 \cup \dots \cup T_{\ell-1})/|T_{\ell}|$ to the value of the
output set.
Formally, we define the averaged version of the random process as
\begin{align*}
  \hat{f}(S_i) &\DEF
  f(T_1 \cup \dots \cup T_{\ell-1})
  + 
  \frac{i - (t_1 + \dots + t_{\ell-1})}{t_\ell}
  \cdot
  \Delta(T_\ell, T_1\cup \dots \cup T_{\ell-1}),
\end{align*}
where we use the overloaded notation
$S_i = (i, T_1, T_2, \dots, T_\ell)$ to record
the history of the process up to adding the $i$-th element.
This means that
$t_1 + t_2 + \dots + t_{\ell - 1} < i$ and
$t_1 + t_2 + \dots + t_{\ell} \ge i$.
Note that for a given history $T_1,T_2,\dots,T_m$ of subsets, both processes
agree after adding a complete subset.
Analogously, we define the marginal of the $i$-th element $X_i$ of this
process to be
\[
  \hat\Delta(X_i, S_{i-1}) \DEF
  \frac{\Delta(T_\ell, T_1\cup \dots \cup T_{\ell-1})}{t_\ell}.
\]
Since the original algorithm induces a probability distribution over sequences of
returned subsets, this defines a distribution over the
values of $\hat{f}(S_i)$ and $\hat{\Delta}(X_i, S_{i-1})$ for all indices $i \in [k]$.

Lastly, it will be useful to define the distribution $\mathcal{H}$ over all
possible (random bit) histories $(T_1,T_2,\dots,T_m)$
at the termination of the algorithm, and also the distributions
$\mathcal{H}_i$, for all $i \in [k]$, over the
possible histories \emph{immediately} before adding the $i$-th
element. This means that for each $h = (T_1,T_2,\dots,T_{\ell-1}) \in
\text{supp}(\mathcal{H}_i)$ we have
$t_1 + t_2 + \dots + t_{\ell-1} < i$ and
there exists a set~$T_\ell$ that can be drawn such that
$t_1 + t_2 + \dots + t_{\ell} \ge i$.
Let $H_i(h)$ be the event over $\mathcal{H}_i$ such that the history is $h =
(T_1,T_2,\dots,T_{\ell-1})$ and the next returned subset $T_\ell$ adds the
$i$-th element.
We condition our statements on $H_i(h)$, as this captures
the state of the algorithm just before adding the $i$-th element.
To provide intuition for \Cref{lem:expectation_approx},
it is worth noting that $\mathcal{H}$ is a refinement of
$\mathcal{H}_i$ conditioned on $H_i(h)$. This can be seen
by recursively joining leaves in the probability tree of $\mathcal{H}$
until the $i$-th element is reached. The
result is the probability tree of $\mathcal{H}_i$
conditioned on $H_i(h)$.
In the statements that follow, the probabilities and expectations
conditioned on $H_i(h)$ are over the distribution $\mathcal{H}_i$ and
all other expressions are over the distribution $\mathcal{H}$ of final
outcomes.

\begin{lemma} 
\label{lem:conditional_iterative_property}
\change{Conditioned on event $Z$ (defined in \Cref{lem:adaptive_sampling}),}
for any $i \in [k]$, event $H_i(h)$, and threshold $\tau$ such that
$\tau \le \tau^* \le (1+\varepsilon)\tau$, we have
\begin{align*}
  \ExpCond{}{\hat{\Delta}\parens*{X_i, S_{i-1}}}{H_i(h)}
  \ge
  \frac{\parens*{1-\varepsilon}^2}{k}\cdot\ExpCond{}{\OPT - \hat{f}(S_{i-1})}{H_i(h)}.
\end{align*}
\end{lemma}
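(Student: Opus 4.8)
The plan is to reduce the claimed inequality to two facts about the single call to \AdaptiveSampling that produces $T_\ell$, where $\ell$ is the index of the returned subset containing the $i$-th element and $h=(T_1,\dots,T_{\ell-1})$. I condition throughout on $H_i(h)$, which fixes $W\DEF T_1\cup\dots\cup T_{\ell-1}$ and $p\DEF|W|$ (note $p<i\le k$, so $|W|<k$), leaving only $T_\ell$ random; write $\tau_\ell$ for the threshold at which $T_\ell$ is returned. Since the averaged process assigns every element of $T_\ell$ the same marginal, $\hat\Delta(X_i,S_{i-1})=\Delta(T_\ell,W)/t_\ell$, while $\hat f(S_{i-1})=f(W)+\tfrac{(i-1)-p}{t_\ell}\,\Delta(T_\ell,W)\ge f(W)$ by monotonicity, so $\ExpCond{}{\OPT-\hat f(S_{i-1})}{H_i(h)}\le\OPT-f(W)$. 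Hence it suffices to prove (i) $\ExpCond{}{\Delta(T_\ell,W)/t_\ell}{H_i(h)}\ge(1-\varepsilon)\tau_\ell$ and (ii) $\OPT-f(W)\le\tfrac{k}{1-\varepsilon}\tau_\ell$, from which $\ExpCond{}{\hat\Delta(X_i,S_{i-1})}{H_i(h)}\ge(1-\varepsilon)\tau_\ell\ge\tfrac{(1-\varepsilon)^2}{k}(\OPT-f(W))\ge\tfrac{(1-\varepsilon)^2}{k}\ExpCond{}{\OPT-\hat f(S_{i-1})}{H_i(h)}$. If $\tau_\ell$ is not already determined by $h$, I would simply prove (i) and (ii) conditioned further on the realized value of $\tau_\ell$ and average at the end.

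For (ii), monotonicity and submodularity give $\OPT=f(S^*)\le f(S^*\cup W)\le f(W)+\sum_{x\in S^*}\Delta(x,W)\le f(W)+k\max_{x\in N}\Delta(x,W)$, so it is enough to show $\max_{x\in N}\Delta(x,W)<\tau_\ell/(1-\varepsilon)$. Whenever $\ell\ge2$, or $\ell=1$ with $\tau_1<\tau$, the inner loop of \Wrapper made an immediately preceding call $\AdaptiveSampling(f_W,k-|W|,\tau_\ell/(1-\varepsilon),\varepsilon,\hat\delta)$ whose input solution was $W$ (the intervening \AdaptiveSampling calls contributed nothing) and which returned $\emptyset$; since $|W|<k$, Property~3 of \Cref{lem:adaptive_sampling} gives $\Delta(x,W)<\tau_\ell/(1-\varepsilon)$ for all $x$, as needed. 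The only remaining case is $\ell=1$ with $\tau_1=\tau$, i.e.\ $W=\emptyset$, where instead $\OPT-f(\emptyset)\le\OPT=k\tau^*\le k(1+\varepsilon)\tau$, and this gives $\tfrac{(1-\varepsilon)^2}{k}(\OPT-f(\emptyset))\le(1-\varepsilon)^2(1+\varepsilon)\tau\le(1-\varepsilon)\tau=(1-\varepsilon)\tau_1$, which already matches the bound from (i).

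For (i), Property~2 of \Cref{lem:adaptive_sampling} applied to the call $\AdaptiveSampling(f_W,k-|W|,\tau_\ell,\varepsilon,\hat\delta)$ gives $\Exp{}{\Delta(T_\ell,W)/t_\ell}\ge(1-\varepsilon)\tau_\ell$ over that call's internal randomness, but conditioning on $H_i(h)$ additionally conditions on $t_\ell\ge i-p$, and this is the delicate step: a priori, a larger returned set could have a smaller per-element average. I would handle it by opening up \AdaptiveSampling and using that $T_\ell$ is built one round at a time, where---by the per-round estimate $\ExpCond{}{\Delta(T^{(s)},S^{(s-1)})}{\text{state at start of round }s}\ge|T^{(s)}|(1-\varepsilon)\tau_\ell$ proved inside Property~2---every round contributes per-element expected gain at least $(1-\varepsilon)\tau_\ell$; writing $\Delta(T_\ell,W)$ as the telescoping sum of the per-round gains and $t_\ell$ as the sum of the per-round sizes, the ratio is a size-weighted average of per-round per-element averages, each of conditional mean at least $(1-\varepsilon)\tau_\ell$, while the event $\{t_\ell\ge i-p\}$ only restricts how many rounds occur. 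Making this averaging argument rigorous---given that the per-round gains, the per-round sizes, and the number of rounds are mutually dependent---is the main obstacle, and I expect it to require a martingale/optional-stopping formulation or an explicit coupling. Everything else is a short composition of \Cref{lem:adaptive_sampling}, monotonicity, submodularity, and the sandwiching $\tau\le\tau^*\le(1+\varepsilon)\tau$.
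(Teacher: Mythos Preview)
Your approach mirrors the paper's almost exactly: both reduce the claim to combining Property~3 of \Cref{lem:adaptive_sampling} (applied to the call at threshold $\tau_\ell/(1-\varepsilon)$, which returned fewer than $k-|W|$ elements) to bound $\OPT-f(W)\le k\tau_\ell/(1-\varepsilon)$, and Property~2 (applied to the call at threshold $\tau_\ell$) to bound the expected per-element marginal of $T_\ell$ by $(1-\varepsilon)\tau_\ell$. The paper likewise treats the base case (their $i=1$, your $\ell=1$ with $\tau_1=\tau$) separately using $\tau\le\tau^*\le(1+\varepsilon)\tau$, and for $i>i^*\DEF p+1$ reduces to the $i^*$ case via the averaging identity $\hat\Delta(X_i,S_{i-1})=\hat\Delta(X_{i^*},S_{i^*-1})$ together with $\hat f(S_{i-1})\ge\hat f(S_{i^*-1})=f(W)$---precisely your reduction.

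The one substantive difference is your step~(i). You correctly observe that conditioning on $H_i(h)$ for $i>i^*$ includes the event $t_\ell\ge i-p$, and that Property~2 of \Cref{lem:adaptive_sampling} is stated only for the unconditional expectation over the call's randomness. The paper does \emph{not} address this point: in the $i>i^*$ case it writes $\ExpCond{}{\Delta(T_\ell,S_{i^*-1})/|T_\ell|}{H_i(h)}$ and simply invokes ``the analysis for the previous case,'' applying Property~2 as though the extra conditioning on $|T_\ell|\ge i-p$ were immaterial. So the subtlety you flag as the ``main obstacle'' is swept under the rug in the paper's proof; your proposed martingale/optional-stopping formulation is a plausible route to make it rigorous, but there is nothing in the paper to compare it against. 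In every other respect your argument and the paper's coincide.
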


\begin{proof}
First we prove the claim for $i = 1$ and then we proceed by case analysis.
If $i=1$ there is no history, so it suffices to show that
  $E[\hat{f}(S_1)] \ge (1-\varepsilon)^2\OPT/k$.
The first element belongs to the subset $T_1$ returned by
\AdaptiveSampling, so by
Property 2 of \Cref{lem:adaptive_sampling},
it follows that
\begin{align*}
  \Exp{}{\hat{f}\parens*{S_1}}
  &=
  \Exp{}{\frac{f\parens*{T_1}}{\abs*{T_1}}}
  \ge \parens*{1-\varepsilon} \tau
  \ge \parens*{1-\varepsilon}^2 \cdot \frac{\OPT}{k}.
\end{align*}

Assuming that $i > 1$,
  let $i^* = {t_1 + \dots + t_{\ell - 1} + 1}$ be the size of the partial solution
after adding the first element in $T_\ell$.
We consider the cases $i=i^*$ and $i > i^*$ separately.
If $i = i^*$, observe that for monotone submodular functions we have
\begin{align*}
  f\parens*{S^*} &\le f\parens*{S^* \cup S_{i-1}}\\
  &\le f\parens*{S_{i-1}} + \sum_{x \in S^*} \Delta(x, S_{i-1})\\
  &\le f(S_{i-1}) + k \cdot \change{\tau_{\ell-1}} & \text{(Property 3 of \Cref{lem:adaptive_sampling})}\\
  &= f(S_{i-1}) + k \cdot \change{\frac{\tau_{\ell}}{1-\varepsilon}} \\
  &\le f(S_{i-1}) + \frac{k}{(1-\varepsilon)^2} \cdot \ExpCond{}{\frac{\Delta\parens*{T_\ell, S_{i-1}}}{\abs*{T_\ell}}}{H_i(h)} & \text{(Property 2 of \Cref{lem:adaptive_sampling})}\\
  &= f(S_{i-1}) + \frac{k}{(1-\varepsilon)^2} \cdot \ExpCond{}{\hat{\Delta}\parens*{X_i, S_{i-1}}}{H_i(h)}.
\end{align*}
In the \change{fourth line},
we have $\tau_{\ell}/(1-\varepsilon)$ 
because \AdaptiveSampling was run with parameter \change{$\tau_{\ell-1}$}
immediately before running with threshold $\tau_{\ell}$, which returned $T_\ell$.
The upper bound for the marginal \change{gain} $\Delta(x, S_{i-1})$ for $x \in
S^*$ is then a consequence of Property 2 and Property 3 of \Cref{lem:adaptive_sampling}.

The history $h=(T_1,T_2\dots,T_{\ell-1})$ is known since we
are conditioning on $H_i(h)$, so it follows that
\[
  f\parens*{S^*} - f\parens*{S_{i-1}}
  =
  \ExpCond{}{\OPT - \hat{f}\parens*{S_{i-1}}}{H_i(h)},
\]
because there is no randomness in the expectation.
Recall that $f(S_{i-1}) = \E[\hat{f}(S_{i-1}) \mid H_i(h)]$ \change{since}
the set $S_{i-1} = T_{1} \cup \dots \cup T_{\ell-1}$ is a union of complete sets, and hence
there are no partial, averaged contributions.
Rearranging the previous inequalities gives
\begin{align*}
  &\ExpCond{}{\hat{\Delta}\parens*{X_i, S_{i-1}}}{H_i(h)}
  \ge \frac{\parens*{1-\varepsilon}^2}{k}\cdot
      \ExpCond{}{\OPT - \hat{f}\parens*{S_{i-1}}}{H_i(h)},
\end{align*}
as desired.

Now we consider the case when $i > i^*$.
Because we condition on the history
$h=(T_1,T_2,\dots,T_{\ell-1})$
immediately before drawing a set $T_\ell$
that necessarily contains the $i$-th element,
the averaging property of~$\hat{f}$ and
the analysis for the previous case give us
\begin{align*}
  \ExpCond{}{\hat{\Delta}\parens*{X_i, S_{i-1}}}{H_i(h)}
  &= \ExpCond{}{ \hat{\Delta}\parens*{X_{i^*}, S_{i^* - 1}} }{H_i(h)}\\
  &\hspace{-2.0cm}\ge \frac{\parens*{1-\varepsilon}^2}{k}\cdot
      \ExpCond{}{\OPT - \hat{f}\parens*{S_{i^* - 1}}}{H_i(h)}\\
  &\hspace{-2.0cm}\ge \frac{\parens*{1-\varepsilon}^2}{k}\cdot
      \ExpCond{}{\OPT - \hat{f}\parens*{S_{i-1}}}{H_i(h)}.
\end{align*}
The final inequality makes use of 
\[
  \E[\hat{f}(S_{i-1}) \mid H_i(h)] \ge \E[\hat{f}(S_{i^* - 1}) \mid H_i(h)],
\]
which is a consequence of monotonicity and
the averaging property of $\hat{f}$.
This completes the proof for all $i \in [k]$.
\end{proof}

%

\begin{lemma}
\label{lem:expectation_approx}
\change{Conditioned on event $Z$,}
if for all $i \in [k]$ and events $H_i(h)$ we have
\begin{align*}
  &\ExpCond{}{\hat{\Delta}\parens*{X_i, S_{i-1}}}{H_{i}(h)}
  \ge
  \frac{(1-\varepsilon)^2}{k}\cdot\ExpCond{}{\OPT - \hat{f}(S_{i-1})}{H_{i}(h)},
\end{align*}
then the algorithm returns a set $S_k$ such that
$\Exp{}{f\parens*{S_k}} \ge \parens*{1 - 1/e - \varepsilon} \OPT$.
\end{lemma}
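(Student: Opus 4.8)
The plan is to set up a one-step recurrence for the expected averaged value $\E[\hat f(S_i)]$ using the hypothesis, unfold it, and compare against the classical greedy recurrence that yields the $1-1/e$ factor.

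First I would establish the key recurrence. Fix $i \in [k]$. Writing $\hat f(S_i) = \hat f(S_{i-1}) + \hat\Delta(X_i, S_{i-1})$ and conditioning on an event $H_i(h)$, the hypothesis gives
\begin{align*}
  \ExpCond{}{\hat f(S_i)}{H_i(h)}
  &= \ExpCond{}{\hat f(S_{i-1})}{H_i(h)} + \ExpCond{}{\hat\Delta(X_i, S_{i-1})}{H_i(h)}\\
  &\ge \ExpCond{}{\hat f(S_{i-1})}{H_i(h)} + \frac{(1-\varepsilon)^2}{k}\ExpCond{}{\OPT - \hat f(S_{i-1})}{H_i(h)}.
\end{align*}
The subtle point flagged in the text before the lemma is that $\mathcal H$ is a \emph{refinement} of $\mathcal H_i$ conditioned on $H_i(h)$: every final history in $\mathcal H$ that passes through the state described by $H_i(h)$ agrees with $h$ on $S_{i-1}$ and on $X_i$ up to the randomness still to come. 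Hence taking expectation over $\mathcal H_i$ (i.e., over the choice of $h$ and which subset contains the $i$-th element), and using that this is consistent with the distribution $\mathcal H$ of final outcomes, I would pass from the conditional inequality to the unconditional one:
\begin{align*}
  \Exp{}{\hat f(S_i)} \ge \Exp{}{\hat f(S_{i-1})} + \frac{(1-\varepsilon)^2}{k}\parens*{\OPT - \Exp{}{\hat f(S_{i-1})}}.
\end{align*}
\textbf{I expect this law-of-total-expectation/refinement step to be the main obstacle} — not because it is deep, but because one has to argue carefully that averaging the per-state inequalities over $\mathcal H_i$ reproduces an expectation over $\mathcal H$, given that the $H_i(h)$ events partition the probability space and $\hat f(S_{i-1})$, $\hat\Delta(X_i,S_{i-1})$ are measurable with respect to the $\mathcal H_i$-history in the way the lemma's notation intends.

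Next I would solve the recurrence. Let $a_i = \Exp{}{\hat f(S_i)}$ with $a_0 = 0$. Setting $\gamma = (1-\varepsilon)^2/k$, the recurrence reads $a_i \ge (1-\gamma) a_{i-1} + \gamma\,\OPT$, so a standard induction gives $a_k \ge \parens*{1 - (1-\gamma)^k}\OPT$. Since the two processes agree after a complete subset and $S_k$ is assumed to have size exactly $k$ (so $S_k$ is a union of complete sets), $\Exp{}{f(S_k)} = \Exp{}{\hat f(S_k)} = a_k$. It remains to lower-bound $1 - (1-\gamma)^k = 1 - \parens*{1 - (1-\varepsilon)^2/k}^k$. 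Using $(1-x/k)^k \le e^{-x}$ with $x = (1-\varepsilon)^2$ gives $a_k \ge \parens*{1 - e^{-(1-\varepsilon)^2}}\OPT$, and then a short estimate $e^{-(1-\varepsilon)^2} \le e^{-1+2\varepsilon} = e^{-1}e^{2\varepsilon} \le e^{-1}(1 + O(\varepsilon))$ together with $e^{-1} \le 1$ yields $1 - e^{-(1-\varepsilon)^2} \ge 1 - 1/e - O(\varepsilon)$. After rescaling $\varepsilon$ by a constant (absorbed into the $O(\log(n)/\varepsilon^2)$ bounds), this is exactly $\Exp{}{f(S_k)} \ge (1 - 1/e - \varepsilon)\OPT$, completing the proof.
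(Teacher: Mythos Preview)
Your proposal is correct and follows essentially the same approach as the paper: the paper works with $\delta_i = \OPT - \hat f(S_i)$ rather than $a_i = \E[\hat f(S_i)]$, but this is a trivial change of variables, and both proofs hinge on exactly the law-of-total-probability step you flagged (the paper writes it as $\E[\delta_i] = \sum_h \E[\delta_i \mid H_i(h)]\Pr[H_i(h)]$) before iterating the recurrence and applying $(1-x/k)^k \le e^{-x}$. The only cosmetic difference is that the paper asserts $(1-(1-\varepsilon)^2/k)^k \le 1/e + \varepsilon$ directly, whereas you spell out the $e^{-(1-\varepsilon)^2}$ bound and absorb a constant into $\varepsilon$.
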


\begin{proof}
Let $\delta_i = \OPT - \hat{f}(S_i)$, and observe that 
\[
  \ExpCond{}{\change{\hat\Delta}\parens*{X_i,S_{i-1}}}{H_i(h)}
  = 
  \ExpCond{}{\delta_{i-1}}{H_i(h)} - \ExpCond{}{\delta_{i}}{H_i(h)}
\]
by the linearity of expectation.
It follows from the assumption that
\[
  \ExpCond{}{\delta_{i}}{H_i(h)} \le \parens*{1 - \frac{(1-\varepsilon)^2}{k}}
  \cdot \ExpCond{}{\delta_{i-1}}{H_i(h)}.
\]
Since $\mathcal{H}_i$ conditioned on the event $H_i(h)$
is a partition of the final outcome distribution $\mathcal{H}$, 
it follows from the law of total probability that
\begin{align*}
  \Exp{}{\delta_{i}} &= \sum_{h \in \text{supp}\parens*{\mathcal{H}_i}}
    \ExpCond{}{\delta_{i}}{H_i(h)} \cdot \Prob{}{H_i(h)}\\
    &\le \parens*{1 - \frac{(1-\varepsilon)^2}{k}} \hspace{-0.2cm}\sum_{h \in \text{supp}\parens*{\mathcal{H}_i}}
      \hspace{-0.4cm}\ExpCond{}{\delta_{i-1}}{H_i(h)} \cdot \Prob{}{H_i(h)}\\
    &\le \parens*{1 - \frac{(1-\varepsilon)^2}{k}} \cdot \Exp{}{\delta_{i-1}}.
\end{align*}
Iterating this inequality over the sequence of expectations $\E[\delta_i]$ and
the using the fact $1-x \le e^{-x}$,
\begin{align*}
  \Exp{}{\delta_k} &\le \parens*{1 - \frac{(1-\varepsilon)^2}{k}}^k \cdot \Exp{}{\delta_{0}}
  \le \parens*{\frac{1}{e} + \varepsilon}\cdot\Exp{}{\delta_{0}}.
\end{align*}
We have
$\E[\hat{f}(S_0)] = \E[f(S_0)]$ and
$\E[\hat{f}(S_k)] = \E[f(S_k)]$
by the construction of $\hat{f}$.
\change{Further},
since~$f$ is nonnegative, we have $\delta_0 = \OPT - \hat{f}(S_0) \le \OPT$.
\change{Therefore, we have}
$
  \Exp{}{f\parens*{S_k}} \ge \parens*{1 - 1/e - \varepsilon} \OPT,
$
which completes the proof.
\end{proof}


\begin{proof}[Proof of Theorem~\ref{thm:approx_factor}]
The cardinality constraint is satisfied by construction, so we start by
proving the adaptivity complexity of \Wrapper.
Lowering bounding $\OPT$ by $\Delta^*$ takes one adaptive round,
and each execution of the block in the parallelized for loop is independent of
all previous iterations. Therefore, it suffices to bound the adaptivity
complexity of the for loop block (\change{Lines 5--13}).
Each invocation of \AdaptiveSampling is potentially dependent on the last
since~$S$ can be updated in each round.
Therefore, because there are $m = O(1/\varepsilon)$ iterations in
the block, the total adaptivity complexity is
$O\parens{m \log(n/\delta) / \varepsilon}
= O\parens{\log(n/\delta)/\varepsilon^{2}}$
by \Cref{lem:adaptive_sampling}.

Now we analyze the query complexity of the algorithm.
Each call to \AdaptiveSampling behaves as intended with
probability at least $1-\hat{\delta}$, so by our choice of $\hat{\delta}$
and a union bound, all calls to \AdaptiveSampling are correct
with probability at least $1-\delta$
\change{(i.e., event $Z$ holds)}.
\change{If this is the case, then}
the expected query complexity of \AdaptiveSampling is~$O(n/\varepsilon)$
\change{by \Cref{lem:adaptive_sampling}}.
Therefore, it follows that the \change{overall} expected query complexity of \Wrapper is
$O(n + rm (n/\varepsilon) + \change{\delta n^2}) = O(n \log(k)/\varepsilon^{3} + \change{\delta n^2})$.

To prove the approximation guarantee, first observe that
$\Delta^* \le \OPT \le k \Delta^*$ by submodularity.
Therefore, we know that $\Delta^*/k \le \tau^* \le \Delta^*$.
The values of $\tau$ considered are $(1+\varepsilon)^i\Delta^*/k$, so by
our choice for the number of iterations $r = O(\log(\change{k})/\varepsilon)$,
there exists a $\tau$ satisfying
$\tau \le \tau^* \le (1+\varepsilon)\tau$.
Although we do not know this value of $\tau$,
we \change{can} use its existence to
give a guarantee by taking the maximum over all potential solutions.
Therefore, \change{conditioned on $Z$ which happens} with probability at least $1 - \delta$, we have
\[
  \Exp{}{f(S) \mid \change{Z}} \ge \parens*{1-1/e-\varepsilon}\OPT
\]
by \Cref{lem:conditional_iterative_property} and \Cref{lem:expectation_approx},
assuming the returned set satisfies $|S|=k$.
If instead $|S| < k$, then all unchosen elements $x \in N \setminus S$
have marginals $\Delta(x, S) \le \tau/4$ by our choice of $m$
and Property~3 of \Cref{lem:adaptive_sampling}.
Thus, for any $\tau \le \tau^*$, 
monotonicity and submodularity give
\begin{align*}
  f\parens*{S^*} &\le f(S) + \sum_{x \in S^*} \Delta(x, S)
    \le f(S) + k \tau / 4,
\end{align*}
which implies 
$f(S) \ge \parens{1 - 1/4} \OPT \ge \parens{1 - 1/e} \OPT$.
\change{Putting everything together and using the nonnegativity of $f$, we have
$\E[f(S)] \ge (1 - 1/e - \varepsilon)(1-\delta)\OPT$, which completes the proof.
}
\end{proof}

\section{Achieving Linear Query Complexity via Preprocessing}
\label{sec:linear-queries}

In this section we demonstrate different ways of using \AdaptiveSampling to
preprocess the interval containing $\OPT$ and
reduce the total query complexity of the algorithm without increasing its adaptivity.
In \Cref{sec:binary-search} we show
how to we reduce the ratio of the interval containing $\OPT$ from $O(k)$ to
$O(\log(k))$ in $O(\log\log(k))$ iterations of an imprecise binary search,
reducing the query complexity from $O(n \log(k))$ to $O(n \log\log(k))$.
In \Cref{sec:subsampling} we show how to reduce the ratio
of the interval from~$R$ to \change{$O(\log^5(R))$} \change{in each step,}
until \change{the ratio} is \change{a} constant.
\change{This second approach subsamples} the ground set
and uses the \change{imprecise} binary search subroutine.
By \change{adaptively setting} the parameters at each step according to the current ratio,
we reduce the query complexity to~$O(n)$ while maintaining $O(\log(n))$ adaptivity.

\subsection{Reducing the Query Complexity with an Imprecise Binary Search} 
\label{sec:binary-search}


To see how we can use a binary search,
consider the output of $\AdaptiveSampling(f, k, \tau, 1-p, \delta)$
for an arbitrary value of $\tau$.
If $|S| = k$, then by
Property~2 of \Cref{lem:adaptive_sampling} we have
$pk\tau \le \E[f(S)] \le \OPT$.
Otherwise, if $|S| < k$ then 
by Property~3 of \Cref{lem:adaptive_sampling}
we have $\Delta(x,S) \le~\tau$.
In the second case, it follows for monotone submodular functions that
$f(S) \le \OPT \le f(S) + k\tau$.
If $f(S) \le k\tau$ then $\OPT \le 2k\tau$,
and if $k\tau < f(S) \le \OPT$ then $pk\tau \le \OPT$.
Therefore, after each call to \AdaptiveSampling we can determine
with probability at least $1 - \hat\delta$ that
one of the following inequalities is true:
$\OPT \le 2k\tau$ or $pk\tau \le \OPT$.
Note that these decisions may overlap, hence the term
\change{\emph{imprecise binary search}}.
We give the guarantees \change{for} \BinarySearch below
and defer the proof of \Cref{cor:binary_search} to
\Cref{app:binary_search_proof}.

\begin{algorithm}[H]
  \caption{\BinarySearch}
  \label{alg:binary-search}
  \vspace{0.1cm}
  \textbf{Input:} evaluation oracle for $f : 2^N \rightarrow \change{\R_{\ge 0}}$, constraint $k$, error $\varepsilon$, failure probability $\delta$
  \begin{algorithmic}[1]
    \State Set maximum marginal $\Delta^* \leftarrow \max\{f(x) : x \in N\}$
    \State Set interval bounds $L \leftarrow \Delta^*$, $U\leftarrow k\Delta^*$
    \State Set balancing parameter $p \leftarrow 1/\log(k)$
    \State Set upper bound $m \leftarrow \change{\ceil{\log_2(\log(k))}}$
    \State Set smaller failure probability $\hat{\delta} \leftarrow \delta/(m+1)$
    \For{$i=1$ to $m$} 
      \State Set $\tau \leftarrow \sqrt{LU/(2p)}/k$
      \State Set $S \leftarrow \AdaptiveSampling(f, k, \tau, 1-p, \hat{\delta})$
      \If{$|S| < k$ and $f(S) \le k\tau$} \Comment{Imprecise binary search decision}
        \State Update $U \leftarrow 2k\tau$
      \Else
        \State Update $L \leftarrow pk\tau$
      \EndIf
    \EndFor
    \State \textbf{return} $\Wrapper(f,k,\varepsilon,\hat{\delta})$
                            modified to search over $[L/k,U/k]$
  \end{algorithmic}
\end{algorithm} 

\begin{restatable}[]{corollary}{binarySearch}\label{cor:binary_search}
For any monotone, nonnegative submodular function $f$,
the algorithm \BinarySearch outputs a set $S \subseteq N$ with $|S| \le k$ in
$O(\log(n/\delta)/\varepsilon^2)$ adaptive rounds
\change{and with $O(n \log\log(k) /\varepsilon^{3} + \delta n^2)$}
expected oracle queries \change{such that}
$\E[f(S)] \ge (1 - 1/e - \varepsilon)\change{(1-\delta)}\OPT$.
\end{restatable}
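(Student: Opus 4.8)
The plan is to analyze \BinarySearch by combining the per-iteration correctness of the imprecise binary-search decision (essentially worked out in the text preceding Algorithm~\ref{alg:binary-search}) with a recurrence for how fast the ratio $U/L$ shrinks, and then applying the analysis of \Wrapper (Theorem~\ref{thm:approx_factor}) to the resulting short interval. First I would establish the loop invariant $L \le \OPT \le U$. Condition on the event---of probability at least $1-\hat\delta$---that the call $\AdaptiveSampling(f,k,\tau,1-p,\hat\delta)$ satisfies Lemma~\ref{lem:adaptive_sampling}. If $|S|<k$ and $f(S)\le k\tau$, then Property~3 gives $\OPT \le f(S)+k\tau \le 2k\tau$, so the update $U\leftarrow 2k\tau$ keeps $\OPT \le U$; otherwise either $|S|<k$ with $f(S)>k\tau$, whence $\OPT \ge f(S) > k\tau \ge pk\tau$, or $|S|=k$, whence Property~2 with error $1-p$ gives $\OPT \ge \E[f(S)] \ge pk\tau$, so $L \leftarrow pk\tau$ keeps $L \le \OPT$. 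Since $L_0=\Delta^*\le\OPT\le k\Delta^*=U_0$ and there are $m$ iterations plus one call to \Wrapper, each correct with probability at least $1-\hat\delta=1-\delta/(m+1)$, a union bound yields that with probability at least $1-\delta$ the final interval $[L/k,U/k]$ contains $\tau^*=\OPT/k$ and the closing \Wrapper call behaves as in Theorem~\ref{thm:approx_factor}.

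The quantitative core is the contraction of $\rho:=U/L$. Substituting $\tau=\sqrt{LU/(2p)}/k$ into both branches, a short computation shows that each one replaces $(L,U)$ by a pair of ratio exactly $\sqrt{2\rho/p}$: in the ``if'' branch the new ratio is $2k\tau/L=\sqrt{2\rho/p}$, and in the ``else'' branch it is $U/(pk\tau)=\sqrt{2\rho/p}$. With $p=1/\log k$ this is the recurrence $\rho_{i+1}=\sqrt{2\rho_i\log k}$ with $\rho_0=k$; taking logarithms turns it into the affine recurrence $\log\rho_{i+1}=\tfrac12\log\rho_i+\tfrac12\log(2\log k)$, whose fixed point is $\log(2\log k)$, so $\log\rho_i=\log(2\log k)+2^{-i}\bigl(\log k-\log(2\log k)\bigr)$. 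After $m=\lfloor\log\log(k)/\log 2\rfloor$ iterations we have $2^{-m}=O(1/\log k)$, hence $2^{-m}\log k=O(1)$ and $\rho_m=O(\log k)$. This is exactly where the $\log\log(k)$ in the final bounds comes from.

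Finally I would assemble the three guarantees. With $U/L=O(\log k)$ the modified \Wrapper searches over only $r=\lceil\log_{1+\varepsilon}(U/L)\rceil=O(\log\log(k)/\varepsilon)$ parallel starting thresholds, and since $\tau^*\in[L/k,U/k]$ one of them satisfies $\tau\le\tau^*\le(1+\varepsilon)\tau$; re-running the proof of Theorem~\ref{thm:approx_factor} with this smaller $r$ (via Lemmas~\ref{lem:conditional_iterative_property} and~\ref{lem:expectation_approx}) gives $\E[f(S)]\ge(1-1/e-\varepsilon)\OPT$, adaptivity $O(\log(n/\delta)/\varepsilon^2)$ (only the $O(1/\varepsilon)$ threshold-lowering rounds enter the round count, not the parallel dimension $r$), and expected query complexity $O(n\log\log(k)/\varepsilon^3)$ (the same count $O(n+rmn/\varepsilon)$ as in Theorem~\ref{thm:approx_factor}, but with $O(\log\log(k)/\varepsilon)$ parallel thresholds rather than $O(\log(k)/\varepsilon)$). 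To this I add the cost of the binary-search loop itself: $O(n)$ queries to compute $\Delta^*$, plus $m=O(\log\log k)$ sequential \AdaptiveSampling calls, each run with constant error $1-p\ge 1/2$ and hence $O(n)$ expected queries and $O(\log(n/\delta))$ rounds; the loop thus contributes $O(n\log\log k)$ expected queries (dominated) and $O(\log(n/\delta)\log\log k)$ adaptive rounds, which I fold into $O(\log(n/\delta)/\varepsilon^2)$.

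I expect the delicate points to be (a) verifying that the \emph{imprecise} decision preserves the invariant under whichever branch is taken---particularly the $|S|=k$ case, where the lower bound $\OPT\ge pk\tau$ is obtained only through an expectation bound on $f(S)$ rather than a per-realization inequality, so one must argue that this is nonetheless a valid deterministic conclusion about $\OPT$ (or strengthen Property~2 conditioned on $|S|=k$)---and (b) keeping the nested ``in expectation, with high probability'' guarantees straight so that the single union bound over the $m+1$ failure events yields the advertised $1-\delta$. The recurrence in Step~2, while the main computation, is routine once set up.
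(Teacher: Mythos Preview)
Your proof tracks the paper's argument closely---the invariant $L\le\OPT\le U$, the ratio recurrence $\log\rho_{i+1}=\tfrac12\log\rho_i+\tfrac12\log(2/p)$ with fixed point $\log(2/p)$, the conclusion $\rho_m=O(\log k)$ after $m=O(\log\log k)$ steps, and the final appeal to Theorem~\ref{thm:approx_factor} over the reduced interval all match. Your delicate point~(a) is also handled informally in the paper via $\OPT\ge\E[f(S)]$ (since $\OPT\ge f(S)$ pointwise and $\OPT$ is deterministic).

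There is one real gap in the adaptivity accounting. You estimate each binary-search call to \AdaptiveSampling at $O(\log(n/\delta))$ rounds (since the error $1-p$ is bounded away from~$0$), giving $O(\log(n/\delta)\log\log k)$ for the loop, and then assert you ``fold'' this into $O(\log(n/\delta)/\varepsilon^2)$. That folding is unjustified: $\varepsilon$ and $k$ are independent parameters, so a $\log\log k$ factor cannot be absorbed into $1/\varepsilon^2$. The paper avoids this by invoking a \emph{sharper} per-call bound: it notes (referring back to the proof of Lemma~\ref{lem:adaptive_sampling}) that the outer-loop count in \AdaptiveSampling is actually $r=O(\log_{(1-\varepsilon)^{-1}}(n/\delta))$, which with $\varepsilon=1-p$ becomes $O(\log(n/\delta)/\log(1/p))=O(\log(n/\delta)/\log\log k)$ per call; hence the $m=O(\log\log k)$ sequential calls contribute only $O(\log(n/\delta))$ rounds in total. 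This cancellation between $m$ and the per-call round count is precisely the reason for choosing $p=1/\log k$, and it is the missing idea in your argument.
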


\subsection{Reducing the Query Complexity by Subsampling} 
\label{sec:subsampling}

\change{Now} we describe how to combine \AdaptiveSampling and subsampling
to preprocess the interval containing $\OPT$ in $O(\log(n))$ adaptive rounds
and with a total of $O(n)$ queries so that the final interval has a constant
ratio.
There are three main ideas underlying the algorithm 
\SubsamplePreprocess\change{:
\begin{enumerate}
    \item We subsample the ground set $N$ so that the query complexity of each
       \AdaptiveSampling call is sublinear (\Cref{lem:adaptive_sampling}).
    \item We relate the optimal solution in the sampled space to $\OPT$ via the subsampling probability.
    \item We repeatedly subsample the ground set $N$ with a granularity that depends on the current ratio of the feasible interval $R$.
    In each of these iterations, we run
    $O(\log(R))$ imprecise binary search decisions in parallel (by calling
\AdaptiveSampling with error $1-p$ as described in \Cref{sec:binary-search}) to
reduce the ratio from $R$ to $O(\log^5(R))$.
\end{enumerate}
}
\noindent
The adaptivity of each step is $O(\log(n/\change{\delta})/\log(1/p))$
by \Cref{lem:adaptive_sampling} because the calls are distributed.
There are $O(\log^*(R))$ ratio reduction rounds,\footnote{\change{We
note that $\log^*(n)$ denotes the \emph{iterated logarithm}
defined as $\log^*(n) := 1 + \log^*(\log(n))$ if $n > 1$ and $0$ otherwise.}}
but by our choice of \change{parameters} $\ell$ and $p$ in each round,
the total number of adaptive rounds is $O(\log(n/\delta))$.
Therefore, when \SubsamplePreprocess terminates,
the interval containing $\OPT$ has a constant ratio.
\change{In the last step,} we run \Wrapper modified to search over this new interval
for the final solution.

Now we formally present \SubsamplePreprocess and state the lemmas that are
prerequisites for its guarantees.  All proofs \change{for this} preprocessing \change{step}
are deferred to \Cref{sec:subsample_proofs}.
We first show how the optimal solution
in a subsampled ground set \change{is related} to $\OPT$
in terms of the subsampling probability.

\begin{restatable}[]{lemma}{subsampleApprox}
\label{lem:subsample_approx}
For any \change{monotone, nonnegative} submodular function $\change{f}$,
sample each element in $N$ independently with probability $1/\ell$.
\change{Let the subsampled} set be $N'$, 
and denote the optimal solution \change{restricted to} $N'$ by $\OPT'$.
Let $\Delta^*$ be an upper bound for the max marginal \change{gain} in $N$.
\change{Then, for any $\delta \in (0, 1]$,}
with probability at least $1 - \delta$, \change{we have}
\[
  \frac{1}{2}\parens*{\Delta^* + \OPT'}
  \le \OPT \le
  \frac{2\ell}{\delta}\parens*{\Delta^* + \OPT'}.
\]
\end{restatable}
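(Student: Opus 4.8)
The plan is to prove the two inequalities separately. The left inequality is deterministic and immediate: since $N' \subseteq N$, every feasible solution for the subsampled instance is feasible for the original instance, so $\OPT' \le \OPT$; and $\Delta^* \le \OPT$ because a best single element is a feasible solution of size $1$ (the same bound used in the analysis of \Wrapper). Adding these and dividing by $2$ gives $\frac12(\Delta^*+\OPT') \le \OPT$. All the work is in the right inequality.

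For the upper bound, fix an optimal solution $S^* = \{a_1,\dots,a_s\}$ in $N$ with $s \le k$, fix this ordering, and set $g_i = \Delta(a_i,\{a_1,\dots,a_{i-1}\})$, so that $\sum_{i=1}^s g_i = f(S^*) = \OPT$ (we may assume $f(\emptyset)=0$) and, by submodularity (decreasing marginals), $0 \le g_i \le f(\{a_i\}) \le \Delta^*$. The structural observation is that if we keep only the surviving elements of $S^*$ in their original order, another application of submodularity gives $f(S^*\cap N') \ge \sum_{i\,:\,a_i\in N'} g_i =: W$, where $W = \sum_{i=1}^s g_i\,\ind[a_i\in N']$ is a sum of \emph{independent} random variables, each taking values in $[0,\Delta^*]$, with $\E[W] = \frac1\ell\sum_i g_i = \OPT/\ell$. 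Since $S^*\cap N'$ is feasible in $N'$, we also have $\OPT' \ge f(S^*\cap N') \ge W$.

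It therefore suffices to show $\Pr[\,W < \frac{\delta}{2\ell}\OPT - \Delta^*\,] \le \delta$, because on the complementary event $\Delta^* + \OPT' \ge \Delta^* + W \ge \frac{\delta}{2\ell}\OPT$, i.e. $\OPT \le \frac{2\ell}{\delta}(\Delta^*+\OPT')$. If $\frac{\delta}{2\ell}\OPT \le \Delta^*$ this probability is $0$ since $W \ge 0$, so assume $\OPT > \frac{2\ell}{\delta}\Delta^*$, equivalently $\E[W] = \OPT/\ell > \frac{2}{\delta}\Delta^*$; then $\frac{\delta}{2\ell}\OPT - \Delta^* < \frac{\delta}{2}\E[W]$, and a multiplicative Chernoff lower‑tail bound applied to $W/\Delta^*$ (a sum of independent $[0,1]$‑valued variables of mean $\E[W]/\Delta^* > 2/\delta$) yields
\[
  \Pr\!\left[\,W < \tfrac{\delta}{2}\E[W]\,\right]
  \le \exp\!\left(-\tfrac{(1-\delta/2)^2\,\E[W]}{2\Delta^*}\right)
  \le \exp\!\left(-\tfrac{(1-\delta/2)^2}{\delta}\right)
  \le \delta,
\]
where the last step is the elementary inequality $(1-\delta/2)^2 \ge \delta\ln(1/\delta)$ valid for $\delta\in(0,1]$ (its minimum over this range is bounded below by a positive constant). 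This proves the upper bound.

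The main obstacle is the right‑hand inequality, and specifically that the failure probability $\delta$ also appears inside the bound $\frac{2\ell}{\delta}$: this rules out a first‑moment argument — Markov or reverse Markov on $W$ (or on $\OPT-W$) only delivers a constant factor, not a $\delta$‑dependent one — and forces a genuine concentration statement. The reduction of $f(S^*\cap N')$ to the independent sum $W$ via the chain rule for marginals, together with the uniform bound $g_i \le \Delta^*$, is exactly what makes a Chernoff estimate available; and the additive $\Delta^*$ term in the statement is what absorbs both the regime where $\OPT$ is too small for concentration to be meaningful and the possibility that $S^*\cap N' = \emptyset$.
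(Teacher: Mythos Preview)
Your proof is correct, and the overall strategy matches the paper's: both reduce the upper bound to a lower-tail concentration statement for the random sum $W = \sum_i g_i\,\ind[a_i\in N']$ of the sampled marginals of a fixed optimal set $S^*$, using the submodularity inequality $\OPT' \ge f(S^*\cap N') \ge W$ (your $W$ is exactly the paper's $g(S^*)$). The difference lies only in the concentration tool. The paper applies Chebyshev's inequality, bounding the variance via $\sum_i g_i^2 \le \Delta^*\sum_i g_i = \Delta^*\cdot\OPT$ and then checking directly that $\Var(W)/(\Delta^*/\delta + \OPT/(2\ell))^2 \le \delta$; this needs no case split and no auxiliary numerical inequality. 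You instead apply a multiplicative Chernoff lower tail to $W/\Delta^*$ after splitting on whether $\OPT \le (2\ell/\delta)\Delta^*$, which delivers exponentially stronger concentration than required but at the cost of verifying the side inequality $(1-\delta/2)^2 \ge \delta\ln(1/\delta)$. Both routes are short; the paper's Chebyshev argument is slightly more self-contained, while yours makes the role of the additive $\Delta^*$ term (absorbing the small-$\OPT$ regime) more explicit.
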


\begin{algorithm}
  \caption{\SubsamplePreprocess}
  \label{alg:subsample-preprocess}
  \vspace{0.1cm}
  \textbf{Input:} evaluation oracle for $f : 2^N \rightarrow \change{\R_{\ge 0}}$, constraint $k$, error $\varepsilon$, constant failure probability $\delta$
  \begin{algorithmic}[1]
    \State Set maximum marginal $\Delta^* \leftarrow \max\{f(x) : x \in N\}$
    \State Set interval bounds $L \leftarrow \Delta^*$, $U\leftarrow k\Delta^*$,
      $R^* \leftarrow \morteza{2000}$
    \While{$U/L \ge R^*$} \label{algline:while-loop}
      \State \change{Set $R \gets U / L$}
      \State Set sampling ratio $\ell \leftarrow \log^2(R)$
      \State Set imprecise decision accuracy $p \leftarrow 1/\log(R)$
      \State Set upper bound $m = \ceil{\log_2(R)}$
      \State Set smaller failure \change{probability} $\hat\delta_R \leftarrow \delta/(2(m+1) \log(R))$
      \State Set $N' \leftarrow \textsc{Subsample}(N, 1/\ell)$
      \For{$i = 0$ to $m$ in parallel}
        \State Set $\tau_i \leftarrow 2^i(L/k)$
        \State Set $S_i' \leftarrow \AdaptiveSampling(f', k, \tau_i,$ $1 - p, \hat\delta_R \change{/n})$
        \State Decide if \change{$\OPT' \le 2k\tau_i$
               or $\OPT' \ge pk\tau_i$}
               using $S_i'$ \Comment{Imprecise binary search decision}
      \EndFor
      \If{\change{$\OPT' \le 2k\tau_0$}} \Comment{Update interval}
        \State Update $L \leftarrow (\Delta^* + L)/2$ 
        \State Update $U \leftarrow (4\ell/\hat\delta_R)(\Delta^* + L)$
      \ElsIf{\change{$\OPT' \ge pk\tau_m$}}
        \State Update $L \leftarrow \change{pU}$ 
        \State Update $U \leftarrow \change{U}$ \change{\Comment{$U$ stays intact}} 
      \Else
        \State Set $i^* \leftarrow$ first $i=0$ to $m$
              such that \change{ $\OPT' \ge pk \tau_i$ and $\OPT' \le 2k \tau_{i+1}$}
        \State Update $L \leftarrow (p/2)(\Delta^* +  \change{2^{i^*} L})$
        \State Update $U \leftarrow (\change{8}\ell/\hat\delta_R)(\Delta^* +  \change{2^{i^*} L})$
      \EndIf
      \morteza{
      \If{$\log(\log(R)) < 2 \log(\log(U/L))$} \label{algline:if-condition-loglog-decline} \Comment{$R$ is sufficiently small}
        \State \textbf{break} \label{algline:loglog-did-not-decrease-enough}
      \EndIf }
    \EndWhile
    \State \textbf{return} $[L, U]$
  \end{algorithmic}
\end{algorithm} 

Next we show that in each round of \SubsamplePreprocess, the current ratio $R$
becomes polylogarithmically smaller until it drops below a constant lower bound
threshold $R^*$.
The adaptivity and query complexity of this iteration is sublinear
\change{in} $R$, so by summing over the $O(\log^*(k))$
rounds of \SubsamplePreprocess, the
total number of adaptive rounds and expected number of queries are
$O(\log(n/\delta))$ and~$O(n)$, respectively.

\begin{restatable}[]{lemma}{reduceRatio}
\label{lem:reduce_ratio}
For any monotone, \change{nonnegative} submodular function $f$,
let $[L,U]$ be an interval containing $\OPT$ with $U/L = R$.
For any ratio $R > 0$, with probability at least $1 - \delta$,
we can compute a new feasible interval with ratio \change{at most $(64/\delta) \log^5(R)$} 
such that:
\begin{itemize}
  \item The number of adaptive rounds is
    $O\parens*{\frac{\log\parens*{n/\delta}}{\log\log(R)}}$.
  \item The \change{expected} number of queries is $O\parens{n/\log(R)}$.
\end{itemize}
\end{restatable}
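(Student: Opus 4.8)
The statement is exactly the analysis of one execution of the \textbf{while}-loop body of \SubsamplePreprocess, so I fix the current interval $[L,U]$ with $U/L=R\ge R^{*}$, assume $\OPT\in[L,U]$, and adopt the algorithm's parameters $\ell=\log^{2}(R)$, $p=1/\log(R)$, $m=\ceil{\log_{2}(R)}$, $\hat\delta_{R}=\delta/(2(m+1)\log(R))$. The plan has three stages. First, subsample $N$ at rate $1/\ell$ to obtain $N'$, and invoke \Cref{lem:subsample_approx} to couple the subsampled optimum $\OPT'$ to $\OPT$ through the sandwich $\tfrac12(\Delta^{*}+\OPT')\le\OPT\le\tfrac{2\ell}{\hat\delta_{R}}(\Delta^{*}+\OPT')$. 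Second, run the $m+1$ calls $\AdaptiveSampling(f',k,\tau_{i},1-p,\hat\delta_{R})$ with $\tau_{i}=2^{i}L/k$ \emph{in parallel}, and extract from each the imprecise dichotomy of \Cref{sec:binary-search}: by Properties~2 and 3 of \Cref{lem:adaptive_sampling}, every call certifies, with probability at least $1-\hat\delta_{R}$, either $\OPT'\le 2k\tau_{i}$ or $pk\tau_{i}\le\OPT'$. Third, assemble $[L',U']$ from the decisions and the coupling.

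\textbf{Localizing $\OPT'$ and producing $[L',U']$.} Condition on the good event that all $m+1$ decisions are correct and the bound of \Cref{lem:subsample_approx} holds; a union bound over these $O(m)$ subevents, each of failure probability $O(\hat\delta_{R})$, keeps the total failure below $\delta$, which is why $\hat\delta_{R}$ carries the extra $1/((m+1)\log R)$ factor. The crude facts $\OPT'\le\OPT\le U$ (since $N'\subseteq N$) and $\OPT\ge\max\{L,\Delta^{*}\}\ge\tfrac12(\Delta^{*}+L)$, together with $2^{m}\ge R=U/L$, force the decision pattern to be one of three: (a) every decision is ``$\le$''; (b) every decision is ``$\ge$''; (c) there is a crossover index $i^{*}$ with $\OPT'\le 2k\tau_{i^{*}}=2^{i^{*}+1}L$ and $pk\tau_{i^{*}+1}=p\,2^{i^{*}+1}L\le\OPT'$ (one checks the intervals $[p\,2^{i+1}L,2^{i+1}L]$ overlap, since $p\le 1/2$, and cover the relevant range, so $i^{*}$ exists whenever we are not in case (a) or (b)). In case (a) the $i=0$ decision gives $\OPT'\le 2L$, so \Cref{lem:subsample_approx} yields $\OPT\le\tfrac{2\ell}{\hat\delta_{R}}(\Delta^{*}+2L)\le\tfrac{4\ell}{\hat\delta_{R}}(\Delta^{*}+L)$ while $\OPT\ge\tfrac12(\Delta^{*}+L)$, matching the update $L'=(\Delta^{*}+L)/2$, $U'=(4\ell/\hat\delta_{R})(\Delta^{*}+L)$. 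In case (b) the $i=m$ decision gives $\OPT'\ge pk\tau_{m}\ge pU$, so $\OPT\ge\tfrac12(\Delta^{*}+pU)\ge\tfrac{p}{2}(\Delta^{*}+U)$, while $\OPT'\le\OPT\le U$ gives $\OPT\le\tfrac{2\ell}{\hat\delta_{R}}(\Delta^{*}+U)$, matching the second branch. In case (c) the two decisions at $i^{*}$ pin $\OPT'\in[p\,2^{i^{*}+1}L,\,2^{i^{*}+1}L]$, and \Cref{lem:subsample_approx} converts this to $\OPT\in[\tfrac{p}{2}(\Delta^{*}+2^{i^{*}+1}L),\,\tfrac{2\ell}{\hat\delta_{R}}(\Delta^{*}+2^{i^{*}+1}L)]$, matching the third branch. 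In every case $\OPT\in[L',U']$, and the new ratio is $U'/L'=O(\poly\log(R)/\delta)$, with the crossover case (c) --- where an $O(1/p)=O(\log R)$-ratio localization of $\OPT'$ is inflated by the $O(\ell/\delta)=O(\log^{2}R/\delta)$ factor of \Cref{lem:subsample_approx} --- being the binding one and giving the $(8/\delta)\log^{3}(R)$ of the statement; this is below $R$ once $R\ge R^{*}=2\cdot 10^{6}/\delta^{2}$, so the loop makes progress.

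\textbf{Adaptivity and query complexity.} Computing $\Delta^{*}$ and drawing $N'$ cost $O(1)$ rounds, so the adaptivity is governed by the $m+1$ \emph{independent} invocations of \AdaptiveSampling, which run simultaneously. By \Cref{lem:adaptive_sampling} and the discussion in \Cref{sec:subsampling}, a single invocation with error $1-p$ keeps only an $O(1/\log R)$-fraction of its candidates per round and hence terminates in $O(\log(n/\delta)/\log(1/p))=O(\log(n/\delta)/\log\log(R))$ adaptive rounds, which is the first bullet. For the second, Property~1 of \Cref{lem:adaptive_sampling} bounds the expected query count of each invocation by $O(|N'|/(1-p))=O(|N'|)$; since $|N'|$ concentrates at $n/\ell=n/\log^{2}(R)$ and there are $m+1=O(\log(R))$ invocations, the total expected query count is $O\big(\log(R)\cdot n/\log^{2}(R)\big)=O(n/\log(R))$, which is the second bullet.

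\textbf{Where the difficulty lies.} The routine parts are the complexity bounds and the coupling, both of which are ``black boxes'' from earlier sections. The delicate part is the case analysis: the imprecise decisions deliberately \emph{overlap} (each call certifies only one side of its dichotomy), so one must argue that the ordered decisions nonetheless realize exactly one of the three patterns, and, in the two extreme patterns, that the only additional information available --- the crude sandwich $\Delta^{*}\le\OPT\le U$ and $\OPT'\le\OPT$ --- is enough, after passing through \Cref{lem:subsample_approx}, to keep $\OPT$ inside the shrunken interval $[L',U']$ even though the imprecise search by itself only bounds $\OPT'$ on \emph{one} side there. A secondary point of care is calibrating $\hat\delta_{R}$ so that the per-iteration failure budget, summed over this iteration and over the $O(\log^{*}(R))$ iterations of \SubsamplePreprocess, stays within the global constant $\delta$.
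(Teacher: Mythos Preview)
Your proposal is correct and follows essentially the same approach as the paper's proof: subsample $N$ at rate $1/\ell=1/\log^{2}(R)$, run $O(\log R)$ imprecise binary-search decisions via \AdaptiveSampling in parallel with error $1-p=1-1/\log(R)$, split into the same three cases (all ``$\le$'', all ``$\ge$'', crossover), and then convert the localization of $\OPT'$ into one for $\OPT$ via \Cref{lem:subsample_approx}. Your complexity bounds are obtained exactly as in the paper, by reading off the $O(\log(n/\delta)/\log(1/p))$ adaptivity and $O(|N'|/(1-p))$ query bounds from \Cref{lem:adaptive_sampling} and summing over the $m+1$ parallel calls.
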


\begin{restatable}[]{lemma}{linearPreprocess}
\label{lem:linear_preprocess}
For any monotone, \change{nonnegative} submodular function $f$ and constant $0 < \delta \le 1$,
with probability at least $1 - \delta$,
the algorithm \SubsamplePreprocess
returns an interval containing $\OPT$ with ratio 
\morteza{$O(\alpha^{4\log(\alpha)})$} 
in $O(\log(n/\delta))$ adaptive rounds
and uses $O(n)$ queries in expectation,
\morteza{where $\alpha = 64/\delta$}.
\end{restatable}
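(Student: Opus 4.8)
The plan is to combine \Cref{lem:reduce_ratio} with a geometric analysis of how the ratio $R = U/L$ shrinks over the iterations of the \texttt{while} loop, and then add up the per-iteration costs. I would start by setting up the recursion on the ratio: if $R_j$ denotes the ratio at the start of iteration $j$, then \Cref{lem:reduce_ratio} (invoked with a constant failure probability allocated to iteration $j$) gives $R_{j+1} = O(\log^3(R_j))$, and since $\delta$ is a fixed constant I can fold the $8/\delta$ factor into the big-$O$. The key observation is that the iteration map $R \mapsto c\log^3(R)$ contracts extremely fast: once $R$ is below any fixed polynomial in the previous value's logarithm, a constant number of further iterations brings it below $R^*$. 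Formally, I would show that the number of iterations of the \texttt{while} loop before $U/L < R^* = 2\cdot 10^6/\delta^2$ is $O(\log^*(k))$ — starting from $R_0 = U_0/L_0 = k$, one application of $\log^3$ drops us to $O(\log^3 k) = (\log k)^{O(1)}$, a second to $O((\log\log k)^{O(1)})$, and after $O(\log^*(k))$ steps we are below the constant threshold $R^*$. I should verify the base case carefully, i.e.\ that $R^*$ is large enough that the map $R \mapsto c\log^3(R)$ actually decreases $R$ for all $R \ge R^*$ (this is where the specific constant $2\cdot 10^6/\delta^2$ in the algorithm is chosen), so that the loop is guaranteed to terminate and the returned interval has ratio $O(1/\delta^2)$ as claimed.

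Next I would handle correctness, i.e.\ that the interval $[L,U]$ returned always contains $\OPT$ with probability at least $1-\delta$. At the top, $[L_0, U_0] = [\Delta^*, k\Delta^*]$ contains $\OPT$ by submodularity. Each iteration invokes \Cref{lem:reduce_ratio}, which with probability $1 - \hat\delta_R$ produces a new feasible interval (still containing $\OPT$); I would union-bound over the $O(\log^*(k))$ iterations. Since each iteration is allotted failure probability a constant fraction of $\delta$ divided among the $O(\log^*(k))$ rounds — and $\log^*(k)$ grows slower than any polynomial, so this is absorbed — the total failure probability is at most $\delta$. (One subtlety: \Cref{lem:reduce_ratio} as stated has failure probability $\delta$ with a $8/\delta$ blowup in the ratio; I would apply it with its own internal parameter set to $\Theta(\delta/\log^*(k))$, which only affects the ratio bound by a $\mathrm{poly}(\log^*(k))$ factor that is still $O(1/\delta^2)$ after the final iteration since the ratio is dominated by $R^*$.)

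Finally I would tally the adaptivity and query complexity by summing the per-iteration bounds from \Cref{lem:reduce_ratio}. In iteration $j$ with ratio $R_j$, the adaptive rounds are $O(\log(n/\delta)/\log\log(R_j))$ and the expected queries are $O(n/\log(R_j))$. Because the ratios $R_0 = k, R_1 = O(\log^3 k), R_2, \dots$ decrease doubly-exponentially (each $R_{j+1}$ is polylogarithmic in $R_j$), the quantities $\log(R_j)$ themselves decrease geometrically until they hit a constant. Hence $\sum_j 1/\log(R_j)$ is dominated by the last few terms and is $O(1)$, giving total expected queries $O(n)$; and $\sum_j \log(n/\delta)/\log\log(R_j)$ is a sum whose terms increase but is still dominated by $O(\log^*(k))$ copies of $O(\log(n/\delta))$ — in fact the first term $O(\log(n/\delta)/\log\log k)$ plus a geometric-type tail, which I would bound crudely by $O(\log(n/\delta))$ since $\log^*(k) = o(\log\log k)$ for the relevant regime, or more carefully observe the $\log\log(R_j)$ denominators only help. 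The main obstacle I anticipate is the bookkeeping in this last step: making rigorous that $\sum_j \log(n/\delta)/\log\log(R_j) = O(\log(n/\delta))$ requires care because the denominators shrink toward a constant over the iterations, so one must argue that the number of iterations where $\log\log(R_j) = \Theta(1)$ is itself only $O(1)$ (true, since once $R_j$ is a constant the loop stops), and that earlier iterations contribute a convergent series. Everything else — termination, interval containment, the union bound — is routine once the doubly-exponential contraction of the ratio is established.
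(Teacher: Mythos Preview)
Your overall strategy matches the paper's: iterate \Cref{lem:reduce_ratio}, track the doubly-exponential shrinkage of the ratio, and sum the per-iteration adaptivity and query costs via bounds on $\sum_j 1/\log\log(R_j)$ and $\sum_j 1/\log(R_j)$. The paper likewise asserts (without detailed justification) that $\sum_j 1/\log\log(R_j) \le 2$, which is exactly the ``bookkeeping'' you flag as the main obstacle.

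There is one genuine technical wrinkle in your failure-probability allocation. You propose invoking \Cref{lem:reduce_ratio} with per-iteration failure $\Theta(\delta/\log^*(k))$, remarking that this only inflates the ratio map by a $\mathrm{poly}(\log^*(k))$ factor. But that inflation feeds back into the contraction analysis: the iteration map becomes $R \mapsto \Theta((\log^*(k)/\delta)\log^3 R)$, which now depends on $k$. For this map to still contract at the fixed threshold $R^* = 2\cdot 10^6/\delta^2$ you would need $\log^*(k)$ bounded by a function of $\delta$ alone, which is false in general. So either the loop may fail to terminate near $R^*$, or you must let $R^*$ grow with $\log^*(k)$ and lose the clean $O(1/\delta^2)$ final ratio.

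The paper sidesteps this circularity by allocating failure probability \emph{as a function of the current ratio}: it uses $\delta_R = \delta/(2\log R)$. Then the map becomes $h(R) = (8/\delta_R)\log^3 R = (16/\delta)\log^4 R$, which is independent of $k$, so a fixed $R^*$ works. The union bound over iterations is then $\sum_j \delta_{R_j} = (\delta/2)\sum_j 1/\log(R_j)$, and this is exactly the same sum that controls the query complexity --- so the single estimate $\sum_j 1/\log(R_j) \le 2$ simultaneously yields the $O(n)$ query bound and the overall failure probability $\le \delta$. This coupling is the one idea your proposal is missing; everything else in your sketch is on target.
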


Last, we show how to use the reduced interval returned by
\SubsamplePreprocess with the \Wrapper~\change{algorithm} to get
\Subsample.

\begin{restatable}[]{theorem}{linearAlgorithm}
\label{lem:linear_algorithm}
For any monotone, nonnegative
submodular function $f$
and constant $0 < \varepsilon \le 1$,
the algorithm \Subsample outputs a set $S \subseteq N$ with $|S| \le k$ in
$O(\log(n\change{/\varepsilon})/\varepsilon^2)$ adaptive rounds
\change{and with $O(n\log\morteza{^2}(1/\varepsilon) / \varepsilon^{3})$ expected queries}
such that
$\E[f(S)] \ge (1 - 1/e - \varepsilon)\OPT$.
\end{restatable}

\begin{proof}
Set a smaller error $\hat\varepsilon = \varepsilon/4$ and run
$\SubsamplePreprocess(f,k,\hat\varepsilon,\change{\hat\varepsilon})$ to obtain an interval with
ratio~\morteza{$O\parens{\alpha^{4\log(\alpha)}}$}
that contains $\OPT$ with probability at least
\morteza{$1 - \hat\varepsilon$ where $\alpha = 64 / \hat\varepsilon$}.
Next, modify and run $\Wrapper(f, k, \hat\varepsilon, \change{\hat\varepsilon / n})$
so that it searches over the interval with ratio~\morteza{$O(\alpha^{4\log(\alpha)})$}.
\morteza{Searching this range requires $O\parens{\log_{1+\hat\varepsilon} \parens*{\alpha^{4\log(\alpha)}}} = O\parens{\log^2(1/\varepsilon)/\varepsilon}$ iterations.} 
Both $\SubsamplePreprocess$ and $\Wrapper$ succeed with probability at least
$1 - 2\hat\varepsilon$ by a union bound.
Therefore, conditioning on the success of both events, we have
\begin{align*}
  \Exp{}{f(S)} &\ge \parens*{1 - 1/e - \hat\varepsilon}\OPT \cdot
    \parens*{1 - 2\hat\varepsilon}
  \ge (1 - 1/e - \varepsilon)\OPT,
\end{align*}
as desired.

The \change{adaptivity} complexity follows from the
guarantees of \Cref{lem:linear_preprocess} and Theorem~\ref{thm:approx_factor}.
\change{For the query complexity, observe that if the preprocessed interval does
not have ratio $O(\alpha^{4 \log(\alpha)})$,
then we can just output the empty set without calling $\Wrapper$
(to avoid a superlinear number of queries),
as this happens with probability at most $\hat{\varepsilon}$ by \Cref{lem:linear_preprocess}.
}
\end{proof}

\section{Using the \AdaptiveSampling Algorithm for Submodular Cover}
\label{sec:submodular-cover}
In the submodular cover problem, we aim to find a minimum cardinality subset
$S$ such that $f(S)$ is at least some target goal $L$. In some sense,
this problem can be viewed as the dual of submodular maximization
with a cardinality constraint. To formalize the submodular
cover problem, we want to solve $\min_{S \subseteq N} |S|$ subject to the value
lower bound $f(S) \geq L$.  We overload the notation $S^*$ to denote the
lexicographically least minimum size set
satisfying the value lower bound. Therefore, the value of
$\OPT$ is the cardinality $|S^*|$.  To overcome granularity issues resulting from
arbitrarily small marginal gains, a standard assumption is to work with
\emph{integer-valued submodular functions}. 

The greedy algorithm provides the state-of-the-art approximation for submodular
cover by outputting a set of size $O(\log(L) |S^*|)$.
There have been recent attempts \change{(e.g.,~\cite{MZK-NIPS16})}
to develop distributed
algorithms based on the greedy approach
that achieve similar approximation factors,
but these algorithms have suboptimal adaptivity complexity because
the summarization algorithm of the centralized machine is sequential.
Here, we show how to apply the ideas behind the \AdaptiveSampling algorithm to
submodular cover so that the algorithm runs in a logarithmic number of adaptive
rounds without losing the approximation guarantee. 

We start by giving a high-level description of our algorithm.
Similar to \cite{MZK-NIPS16}, which attempts to imitate the greedy algorithm,
we initialize $S=\emptyset$ and set the threshold $\tau$ to the highest
marginal value $\change{\Delta^* = \max_{x \in N} f(x)}$.
Then \change{the algorithm} repeatedly adds sets of items to~$S$
whose average value to $S$ is at least $(1-\varepsilon)\tau$.
\change{When} we run
out of high value items, we lower the threshold from $\tau$ to
$(1-\varepsilon)\tau$ and repeat the process.
Unlike the cardinality constraint setting,
the stopping condition of this algorithm
is when the value of $f(S)$ reaches the \change{target value} $L$. 

Specifically, for each threshold $\tau$ we run a variant of \AdaptiveSampling
called \AdaptiveSamplingForCover~\change{(\Cref{alg:sampling-for-cover})}
as a subroutine to find a maximal set of 
valuable items in $O(\log(n))$ adaptive rounds.
The first difference between this algorithm and \AdaptiveSampling is that it
takes the value lower bound $L$ as part of its input instead of a cardinality
constraint~$k$. 
Therefore, we slightly modify the \AdaptiveSampling algorithm as follows.
Since we do not have an explicit constraint on the number of elements
that we can to add,
we set $m$ such that it is possible to add all elements at once.
This change is reflected in Lines~$2$~and~$13$ of \AdaptiveSamplingForCover.
\change{Next,} we use
\[
  \change{k = \ceil*{\frac{L-f(S)}{(1-\varepsilon)\tau}}}
\]
as an upper bound for the number of elements that can be added in each \change{round}.
This is a consequence of our initial choice of $\tau = \Delta^*$
and the method for lowering the threshold as the algorithm progresses.
We \change{know} that for the current \change{threshold} $\tau$,
no element has marginal gain more than $\tau/(1-\change{\varepsilon})$
to the selected set~$S$
\change{by Property 3 of \Cref{lem:adaptive_sampling}.} 
Furthermore, the average contribution of elements in $T$
\change{for this stage}
satisfies $\E[\Delta(T,S)/|T|] \ge (1-\change{\varepsilon})\tau$
\change{by an analog of Property 2 of \Cref{lem:adaptive_sampling}.} 
\change{This then justifies} our choice for the upper bound.

To give intuition for why this leads to an acceptable approximation factor,
observe that for the current threshold $\tau$, the optimum
(conditioned on our current choice of $S$) must have at least
$(L - f(S))/(\tau/(1-\change{\varepsilon}))$ elements since the marginal gains are
\change{upper} bounded.
Our cardinality bound $\change{\ceil{(L-f(S))/((1-\varepsilon)\tau)}}$
implies that the algorithm does not add too many more elements than the
optimum in each round.
The final modification is \change{the} stopping condition on Line~$15$,
where we check whether or not we have reached the value lower bound $L$.

\begin{algorithm}[H]
  \caption{\AdaptiveSamplingForCover}
  \label{alg:sampling-for-cover}
  \vspace{0.1cm}
  \textbf{Input:} evaluation oracle for $f : 2^N \rightarrow \change{\mathbb{Z}_{\ge 0}}$, value goal $L$, threshold $\tau$,
    error $\varepsilon$, failure probability $\delta$
  \begin{algorithmic}[1]
    \State Set smaller error $\hat{\varepsilon} \leftarrow \varepsilon/3$
    \State Set iteration bounds
        $r \leftarrow \ceil{\log_{(1-\hat\varepsilon)^{-1}}(2n/\delta)}$,
        $m \leftarrow \ceil{\log_{\change{(1+\hat\varepsilon)}}(n)}$
    \State Set smaller failure probability
        $\hat{\delta} \leftarrow \delta/(2r(m+1))$
    \State Initialize $S \leftarrow \emptyset$, $A \leftarrow N$
    \For{$r$ rounds}
      \State Filter $A \leftarrow \{x \in A : \Delta(x,S) \ge \tau\}$
      \If{$\abs{A} = 0$}
        \State \textbf{break}
      \EndIf
      \For{$i=0$ to $m$} 
        \State Set $t \leftarrow \min\{\floor{(1 + \hat{\varepsilon})^i}, |A|\}$
        \If{$\EstimateMean(\mathcal{D}_t, \hat{\varepsilon},
          \hat{\delta} )$} 
          \State \textbf{break}
        \EndIf
      \EndFor
      \State Set $T \sim \mathcal{U}\parens*{A, 
                \min\set*{t,\change{\ceil*{(L-f(S))/((1-\varepsilon)\tau)}}}}$
      \State Update $S \leftarrow S \cup T$
      \If{$f(S) \geq L$}
        \State \textbf{break}
      \EndIf
    \EndFor
    \State \textbf{return} $S$
  \end{algorithmic}
\end{algorithm} 

\begin{corollary}
\label{cor:threshold-cover}
\change{Let $Z$ be the event that all calls to \EstimateMean give correct outputs
(i.e., the reported property in \Cref{lem:estimator} holds).
For any integer-valued, monotone, nonnegative submodular function $f$,}
\AdaptiveSamplingForCover outputs $S \subseteq N$
in $O(\log(n/\delta)/\varepsilon)$ adaptive rounds
such that 
the following properties hold \change{conditioned on $Z$}:
\begin{enumerate}
  \item The \change{average marginal gain satisfies} $\E[f(S)/|S|] \ge (1-\varepsilon)\tau$.
  \item \change{With probability at least $1 - \delta/2$,}
    if $f(S) < L$, then $\Delta(x,S) < \tau$ for all $x \in N$.
\end{enumerate}
\change{Further, event $Z$ happens with probability at least $1 - \delta/2$.}
\end{corollary}

\begin{proof}
The proof is a direct consequence of the proof for
\Cref{lem:adaptive_sampling}.
\end{proof}

As explained above, we iteratively use \AdaptiveSamplingForCover as a
subroutine starting from the highest threshold $\tau=\Delta^*$ to ensure that
we either reach the value goal $L$ or that there is no element with marginal
value above~$\tau$ to the current set $S$.  If we have not reached the value lower
bound $L$, we reduce the threshold by a factor of $(1-\varepsilon)$ and
repeat.  This idea is summarized in the \WrapperForCover algorithm.
We note that by the integrality assumption on $f$, this algorithm
is guaranteed to output a feasible solution in a deterministic amount of
time, since the threshold can be lowered enough such that
any element \change{with positive marginal gain} can \change{eventually}
be added to the solution.

\begin{algorithm}[H]
  \caption{\WrapperForCover}
  \label{alg:wrapper-for-cover}
  \vspace{0.1cm}
  \textbf{Input:} evaluation oracle for $f : 2^N \rightarrow \change{\mathbb{Z}_{\ge 0}}$,
  value goal $L$ 
  \begin{algorithmic}[1]
    \State Set error $\varepsilon \leftarrow 1/2$
    \State Set upper bounds $\Delta^* \leftarrow \change{\max\{f(x) : x \in N\}}$,
    $m \leftarrow \ceil{\log(\Delta^*)/\varepsilon} + \change{1}$
    \State Set failure probability $\delta \leftarrow 1/(n(m+1))$
      \State Initialize $S \leftarrow \emptyset$
      \For{$i=0$ to $m$} \Comment{Until $(1-\varepsilon)^i\Delta^* < 1$}
        \State Set $\tau \leftarrow (1-\varepsilon)^i \Delta^*$
        \State Set $T \leftarrow \AdaptiveSamplingForCover(N, f_S, L - f(S), \tau, \varepsilon, \delta)$
        \State Update $S \leftarrow S \cup T$
        \If{$f(S) \geq L$}
          \State \textbf{break}
        \EndIf
      \EndFor
    \State \textbf{return} $S$
  \end{algorithmic}
\end{algorithm} 

\begin{restatable}[]{theorem}{submodularCover}
\label{thm:submodular_cover}
For any integer-valued, nonnegative, monotone submodular function $f$,
the algorithm \WrapperForCover outputs a subset $S \subseteq N$ 
with $f(S) \ge L$ in
$O(\log(n\log(L))\log(L))$
adaptive rounds such that 
$\E[|S|] = O(\log(L)|S^*|)$.
\end{restatable}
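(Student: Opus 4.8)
The plan is to verify the three assertions in the statement: feasibility ($f(S)\ge L$), the adaptivity bound, and the size bound $\E[|S|]=O(\log(L)|S^*|)$. Feasibility is exactly the remark preceding the statement: because $f$ is integer-valued, once the outer threshold drops below $1$ the filtering step in \AdaptiveSamplingForCover retains every element with positive marginal and the cardinality cap ceases to bind, so the main loop cannot terminate with $f(S)<L$ (assuming the instance is feasible, i.e.\ $f(N)\ge L$). For the adaptivity, first replace $f$ by the truncation $\min(f,L)$ — still monotone, submodular, integer-valued, and with the same lexicographically-least optimal set — so that $\Delta^*\le L$ and \WrapperForCover makes $m+1=O(\log L)$ sequential calls to \AdaptiveSamplingForCover. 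Each such call runs in $O(\log(n/\hat\delta)/\varepsilon)=O(\log(n\log L))$ adaptive rounds by \Cref{cor:threshold-cover} (here $\varepsilon=1/2$ and $\hat\delta=1/(n(m+1))$), and the product of the two bounds is $O(\log(n\log L)\log L)$.

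For the size bound I would index the outer iterations as phases $j=0,1,\dots,m$, phase $j$ using threshold $\tau_j=2^{-j}\Delta^*$; write $S_j$ for the partial solution at the start of phase $j$ (so $S_0=\emptyset$), $T_j$ for the set added during phase $j$, and $\sigma_j=L-f(S_j)$ for the residual value. A union bound over the $O(\log L)$ calls (each failing with probability at most $\hat\delta$) produces an event $\mathcal E$ with $\Pr[\mathcal E^c]\le 1/n$ on which every \AdaptiveSamplingForCover call, and every internal \EstimateMean call, behaves as guaranteed. On $\mathcal E^c$ one has $|S|\le n$ trivially, so that outcome contributes at most $1$ to $\E[|S|]$; it therefore suffices to show that, working on $\mathcal E$, the expected number of elements added in each phase is $O(|S^*|)$, which summed over the $O(\log L)$ phases gives $O(\log(L)|S^*|)$ and hence the theorem (since $|S^*|\ge 1$).

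The per-phase estimate has a lower and an upper side. \emph{Comparison to $S^*$:} on $\mathcal E$, if $f(S_j)<L$ then Property 2 of \Cref{cor:threshold-cover} applied to phase $j-1$ (threshold $\tau_{j-1}=2\tau_j$) gives $\Delta(x,S_j)<2\tau_j$ for every $x\in N$, so monotonicity and submodularity yield $L\le f(S^*)\le f(S_j)+\sum_{x\in S^*}\Delta(x,S_j)<f(S_j)+2\tau_j|S^*|$, i.e.\ $\sigma_j/\tau_j<2|S^*|$. \emph{Bounding $|T_j|$:} the batch added in any single round of the phase-$j$ call has size at most $\bigl\lfloor (L-f(\cdot))/((1-\hat\varepsilon)\tau_j)\bigr\rfloor$ evaluated at the current partial solution; since $f$ is monotone and $f<L$ at the start of every round, the value gained before the last round of the phase is below $\sigma_j$, while the final (possibly overshooting) round adds at most $\sigma_j/((1-\hat\varepsilon)\tau_j)$ elements, each of marginal $<2\tau_j$ by the bound inherited from phase $j-1$ via submodularity, so the total value gained in phase $j$ is $O(\sigma_j)$. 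On the other hand, the round-by-round estimate from the proof of Property 2 of \Cref{lem:adaptive_sampling} (which uses \Cref{lem:percentage_nonincreasing}) lower-bounds the expected value gained in phase $j$ by a constant times $\tau_j$ times the expected number of elements added. Dividing the two bounds shows that an expected $O(\sigma_j/\tau_j)=O(|S^*|)$ elements are added in phase $j$, as needed.

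I expect the main obstacle to be making the upper side of the per-phase estimate fully rigorous. The cardinality cap is recomputed every round of a call rather than once per phase, and the ``average marginal of the returned set is at least about $\tau_j$'' guarantee is only an inequality in expectation that itself relies on the randomized \EstimateMean subroutine succeeding; so chaining the round-by-round conditional expectations correctly, cleanly bounding the overshoot of the final round using the $<2\tau_j$ marginal bound carried over from the previous phase, and handling the interaction between conditioning on $\mathcal E$ and these expectations all need care. The remaining ingredients — feasibility, the count of outer iterations, the inequality $\sigma_j<2\tau_j|S^*|$, and the bookkeeping on $\mathcal E^c$ — are routine.
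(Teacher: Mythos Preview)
Your feasibility and adaptivity arguments match the paper's (the paper likewise assumes $\Delta^*<L$, equivalent to your truncation). For the size bound, however, you take a genuinely different route.

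\textbf{Your approach.} You work phase-by-phase, one phase per threshold $\tau_j$. You combine the deterministic inequality $\sigma_j/\tau_j<2|S^*|$ (from Property~2 of the previous call) with a lower bound $\E[\text{value gained in phase }j]\ge c\,\tau_j\,\E[|T_j|]$ obtained by summing the per-round estimate inside the proof of \Cref{lem:adaptive_sampling}, and an upper bound ``value gained $\le O(\sigma_j)$'' obtained from the cardinality cap plus the inherited $<2\tau_j$ marginal bound. Dividing gives $\E[|T_j|]=O(|S^*|)$, and summing over $O(\log L)$ phases finishes.

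\textbf{The paper's approach.} The paper does \emph{not} bound phases directly. Instead it first shows that every individual batch added inside a call has size at most $4|S^*|$ (from the cap and the previous threshold), then groups consecutive batches into \emph{blocks} of size in $[4|S^*|,8|S^*|]$, and proves via the averaged-process machinery of \Cref{sec:maximization} that each block reduces the gap $f(S^*)-f(S)$ by a constant factor with probability at least $0.05$. The expected number of blocks until $\lceil\log(L)/\log(e/2)\rceil$ successes is then read off the negative binomial distribution.

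\textbf{Comparison.} Your argument is more elementary: it avoids the averaged process $\hat f$, the block construction, and the negative-binomial calculation. The paper's argument, on the other hand, yields the stronger intermediate statement that progress is geometric with constant probability (not just in expectation), and it reuses the conditioning framework already built for \Cref{thm:approx_factor}. The obstacle you flag---chaining the per-round conditional expectations and handling the interaction with the global success event $\mathcal E$---is real but no worse than what the paper itself glosses over; both proofs condition on all \EstimateMean calls succeeding and pay $n\cdot\Pr[\mathcal E^c]\le 1$ for the complement. Your per-phase value upper bound $O(\sigma_j)$ is deterministic on $\mathcal E$, and the lower bound follows by conditioning on the filtration that includes the \EstimateMean coins of the current round (so $t_r^*$ is measurable) before averaging over the uniform draw of the batch; the tower property then gives $\E[\text{value}_j]\ge(1-\varepsilon)\tau_j\,\E[|T_j|]$ as you need.
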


\noindent
We defer the proof of Theorem~\ref{thm:submodular_cover} to
\Cref{app:submodular-cover} and remark that it follows a similar line of
reasoning to the analysis of the approximation factor for \Wrapper
in \Cref{lem:expectation_approx}.
The main difference between these proofs is that for
Theorem~\ref{thm:submodular_cover} we need to show that \WrapperForCover
makes geometric progress towards its value constraint~$L$
not only in expectation, but also with constant probability.
We do this by considering the progress of the algorithm in intervals
of $O(|S^*|)$ elements, which ultimately
allows us to analyze $\E[|S|]$
using \change{properties of} the negative binomial distribution.
Lastly, we have not optimized the constant in the approximation
factor, but one could do this by more carefully considering
how the error $\varepsilon$
affects the lower bound for the constant probability term in our analysis.

\bibliographystyle{alpha}
\bibliography{references}

\appendix

\newpage
\section{Missing Analysis from \Cref{sec:threshold}}
\label{app:reduced-mean}

\subsection{Proof of \Cref{lem:percentage_nonincreasing}}
\label{app:nonincreasing_proof}

\percentageNonincreasing*

\change{
\begin{proof}
We have $\E[I_0] = 1$ by the definition of $I_t$ and the fact that the candidates
in $A$ are filtered at the beginning of each round.

Now, let $m=|A|$ and
$(n)_t = \prod_{i=1}^t (n-(i-1))$ denote the falling factorial.
For $t \in \{1,\dots,|A|-1\}$,
summing over all $(t+1)$-truncated permutations of the elements $x_1,x_2,\dots,x_m \in A$
gives
\begin{align*}
    \Exp{}{I_{t}}
    &= \frac{1}{(m)_{t+1}} \sum_{x_1,\dots,x_{t},x_{t+1}}
      \ind\bracks*{\Delta\parens*{x_{t+1}, S\cup\set{x_1,\dots,x_t}} \ge \tau} \\
    &\le \frac{1}{(m)_{t+1}} \sum_{x_1,\dots,x_{t},x_{t+1}}
      \ind\bracks*{\Delta\parens*{x_{t+1}, S\cup\set{x_1,\dots,x_{t-1}}} \ge \tau}
      & \hfill \text{(submodularity)} \\
    &= \frac{m-t}{(m)_{t+1}} \sum_{x_1,\dots,x_{t-1},x_{t+1}}
      \ind\bracks*{\Delta\parens*{x_{t+1}, S\cup\set{x_1,\dots,x_{t-1}}} \ge \tau} \\
    &= \frac{1}{(m)_{t}} \sum_{x_1,\dots,x_{t-1},x_{t}}
      \ind\bracks*{\Delta\parens*{x_{t}, S\cup\set{x_1,\dots,x_{t-1}}} \ge \tau}
      & \hfill \text{(symmetry)} \\
    &=
    \Exp{}{I_{t-1}}.
\end{align*}
The second-to-last equality is a change of variables. Last,
the boundary case $\E[I_{|A|-1}] \ge \E[I_{|A|}] = 0$ holds
by the definition of $I_{t}$.
\end{proof}
}

\subsection{Analysis of \EstimateMean Algorithm}
\label{app:estimator_analysis}

\begin{lemma}[Chernoff bounds, \cite{bansal2006santa}]
\label{lem:chernoff}
Suppose $X_1,\dots,X_n$ are binary random variables such that
  $\Prob{}{X_{i}=1} = p_i$. Let $\mu = \sum_{i=1}^n p_i$ and
$X = \sum_{i=1}^n X_i$. Then for any $a > 0$, we have
\[
  \Prob{}{X - \mu \ge a} \le e^{-a \min\parens*{\frac{1}{5}, \frac{a}{4\mu}}}.
\]
Moreover, for any $a > 0$, we have
\[
  \Prob{}{X - \mu \le - a} \le e^{-\frac{a^2}{2\mu}}.
\]
\end{lemma}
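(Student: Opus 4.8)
The plan is to prove both inequalities by the standard exponential‑moment (Chernoff) method, treating $X_1,\dots,X_n$ as independent (this is implicit in the statement). For the upper tail I would start from Markov's inequality applied to $e^{\lambda X}$ with a parameter $\lambda > 0$: using independence, the factorization $\Exp{}{e^{\lambda X}} = \prod_{i=1}^n\parens*{1 + p_i(e^\lambda - 1)}$, and the bound $1 + x \le e^x$,
\[
  \Prob{}{X - \mu \ge a} \le e^{-\lambda(\mu + a)}\Exp{}{e^{\lambda X}}
  \le \exp\parens*{\mu(e^\lambda - 1) - \lambda\mu - \lambda a}.
\]
Everything then comes down to choosing $\lambda$ according to the size of $a$ relative to $\mu$, so that the two branches of $\min(1/5,\,a/(4\mu))$ appear. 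In the regime $a \le \tfrac45\mu$ I would take the optimal value $\lambda = \ln(1 + a/\mu)$; writing $\delta = a/\mu \in [0,\tfrac45]$, the exponent collapses to $-\mu\bigl((1+\delta)\ln(1+\delta) - \delta\bigr)$, and the elementary inequality $(1+\delta)\ln(1+\delta) - \delta \ge \delta^2/4$ (valid for $0 \le \delta \le 3$) yields $\Prob{}{X - \mu \ge a} \le e^{-\mu\delta^2/4} = e^{-a^2/(4\mu)}$.

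In the complementary regime $a > \tfrac45\mu$ I would instead fix a constant, e.g. $\lambda = \tfrac12$, so that the exponent becomes $\mu\bigl(e^{1/2} - \tfrac32\bigr) - \tfrac12 a$. Since $e^{1/2} - \tfrac32 < \tfrac3{10}$ and $a > \tfrac45\mu$ forces $\mu\bigl(e^{1/2}-\tfrac32\bigr) < \tfrac3{10} a = \tfrac12 a - \tfrac15 a$, this gives $\Prob{}{X - \mu \ge a} \le e^{-a/5}$. Combining the two regimes produces exactly $\Prob{}{X - \mu \ge a} \le e^{-a\min(1/5,\,a/(4\mu))}$, which is the first claim; the crossover point of the two bounds is $a = \tfrac45\mu$, which is precisely where the case split is placed.

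For the lower tail no case split is needed. If $a > \mu$ the event is impossible because $X \ge 0$ forces $X - \mu \ge -\mu > -a$, so the probability is $0$; otherwise, for $\lambda > 0$,
\[
  \Prob{}{X - \mu \le -a} \le e^{\lambda(\mu - a)}\Exp{}{e^{-\lambda X}}
  \le \exp\parens*{\mu(e^{-\lambda} - 1) + \lambda\mu - \lambda a},
\]
and the choice $\lambda = -\ln(1 - a/\mu)$ turns the exponent into $-\mu\bigl((1-\delta)\ln(1-\delta) + \delta\bigr)$ with $\delta = a/\mu \in [0,1]$. The inequality $(1-\delta)\ln(1-\delta) + \delta \ge \delta^2/2$ then gives $\Prob{}{X - \mu \le -a} \le e^{-\mu\delta^2/2} = e^{-a^2/(2\mu)}$.

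The only genuine work is verifying the two scalar inequalities $(1+\delta)\ln(1+\delta)-\delta \ge \delta^2/4$ on $[0,3]$ and $(1-\delta)\ln(1-\delta)+\delta \ge \delta^2/2$ on $[0,1)$ — both by noting the relevant difference together with its first derivative vanishes at $0$ and then controlling the second derivative — and in tracking the constants through the upper‑tail case split so the crossover lands exactly at $a=\tfrac45\mu$. I expect this bookkeeping to be the most delicate part, but it is entirely routine and is the classical derivation of the multiplicative Chernoff bound, consistent with the attribution to \cite{bansal2006santa}.
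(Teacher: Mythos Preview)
The paper does not actually prove this lemma; it is stated without proof and attributed to \cite{bansal2006santa}, then used as a black box in the analysis of \EstimateMean. Your derivation is the classical exponential-moment argument and is correct in outline and in all essential details.

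One small numerical slip in the large-deviation branch: from $e^{1/2} - \tfrac32 < \tfrac3{10}$ alone you only get $\mu\bigl(e^{1/2}-\tfrac32\bigr) < \tfrac3{10}\mu$, which does not imply $\mu\bigl(e^{1/2}-\tfrac32\bigr) < \tfrac3{10}a$ in the range $\tfrac45\mu < a \le \mu$. You need the sharper (and still true) estimate $e^{1/2} - \tfrac32 < \tfrac{6}{25} = \tfrac3{10}\cdot\tfrac45$; combining this with $a > \tfrac45\mu$ then gives $\mu\bigl(e^{1/2}-\tfrac32\bigr) < \tfrac{6}{25}\mu < \tfrac3{10}a$, and the rest of your argument goes through unchanged.
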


\estimator*

\begin{proof}
By construction the number of samples used is
$m = 16 \ceil{\log(2/\delta)/\varepsilon^2}$.
To show the correctness of \EstimateMean,
it suffices to prove that
$\Prob{}{\abs*{\overline{\mu} - \mu} \ge \varepsilon/2} \le \delta$.
Letting $X = \sum_{i=1}^m X_i$, this is equivalent to
\[
  \Prob{}{\abs*{X - m\mu} \ge \frac{\varepsilon m}{2}}
  \le \delta.
\]

Using the Chernoff bounds in \Cref{lem:chernoff} and a union bound,
for any $a > 0$ we have
\[
  \Prob{}{\abs*{X - m\mu} \ge a} \le
  e^{-\frac{a^2}{2m\mu}} + e^{-a \min\parens*{\frac{1}{5}, \frac{a}{4m\mu}}}.
\]
Let $a = \varepsilon m /2$ and consider the exponents of the two terms 
separately.
Since $\mu \le 1$, we bound the left term by
\[
  \frac{a^2}{2m\mu} = \frac{\varepsilon^2 m^2}{8m\mu}
  \ge \frac{\varepsilon^2}{8\mu} \cdot \frac{16 \log(2/\delta)}{\varepsilon^2}
  \ge \log(2/\delta).
\]
For the second term, first consider the case when $1/5 \le a/(4m\mu)$.
For any $\varepsilon \le 1$, it follows that
\[
  a \min\parens*{\frac{1}{5},\frac{a}{4m\mu}} = \frac{1}{5}
  \ge \frac{\varepsilon}{10} \cdot \frac{16\log(2/\delta)}{\varepsilon^2}
  \ge \log(2/\delta).
\]
Otherwise, we have $a/(4m\mu) \le 1/5$, and
by previous analysis we have
$a^2/(4m\mu) \ge \log(2\delta)$.
Therefore, in all cases we have
\[
  \Prob{}{\abs*{X - m\mu} \ge \frac{\varepsilon m}{2}} 
  \le 2e^{-\log(2/\delta)}\\
  = \delta,
\]
which completes the proof.
\end{proof}

\section{Missing Analysis from \Cref{sec:linear-queries}}
\label{app:maximization}

\subsection{Analysis of \BinarySearch Algorithm}
\label{app:binary_search_proof}

\binarySearch*

\begin{proof}
At the beginning of the algorithm, the interval $[L,U] = [\Delta^*, k\Delta^*]$
contains $\OPT$ by submodularity.
In each step of the binary search we can choose $\tau \in [L,U]$
and use \AdaptiveSampling to reduce the interval by some amount such that
the updated interval contains $\OPT$.
This decision process is described in \Cref{sec:binary-search}.
Our goal is to run \Wrapper on a smaller feasible interval with ratio
$U/L = O(1/p)$ so that we can set
$r = O(\log(1/p)/\varepsilon)$ instead of $O(\log(k)/\varepsilon)$.
This objective stems from the fact that \Wrapper grows $(1+\varepsilon)^i$-sized
balls until the interval is covered to approximate $\tau^*$.
Therefore, at each step of the binary search we let
\[
  \tau = \argmin_{\tau' \in [L,U]} \max \set*{\frac{2k\tau'}{L}, \frac{U}{pk\tau'}},
\]
by considering the worst ratio of both outcomes.
Since one function is increasing in $\tau$ and the other is decreasing,
we equate the two expressions to optimize~$\tau$,
which gives us $\tau = \sqrt{UL/(2pk^2)}$.
It follows that the ratio of the updated interval is
at most $\sqrt{2U/(pL)}$.

\change{
Starting with a ratio $R = U/L$,
it follows for any $p \in (0, 1]$, the $i$-th interval ratio is at most
\begin{align*}
    \parens*{\frac{2}{p}}^{\frac{1}{2} + \frac{1}{4} + \dots + \frac{1}{2^i}}
    \cdot
    R^{\frac{1}{2^i}}
    &=
    \parens*{\frac{2}{p}}^{1 - \frac{1}{2^{i+1}}}
    \cdot
    R^{\frac{1}{2^i}} \\
    &\le
    \frac{2}{p} \cdot R^{\frac{1}{2^i}}.
\end{align*}
Initially $R = k$, so
setting $m = \ceil{\log_{2}(\log(k))}$
gives
\begin{align*}
    \frac{2}{p} \cdot R^{\frac{1}{2^m}}
    &\le
    \frac{2}{p}
    \cdot
    R^{\frac{1}{2^{\log_2(\log(k))}}} \\
    &=
    \frac{2}{p} \cdot R^{\frac{1}{\log(k)}} \\
    &=
    \frac{2}{p} \cdot k^{\frac{1}{\log(k)}} \\
    &=
    \frac{2e}{p}.
\end{align*}
Setting $p = 1/\log(k)$, the final ratio is at most $2e \log(k)$.
}
Running \Wrapper on \change{this} preprocessed \change{$[L,U]$} interval,
it suffices to set $r = \ceil{2\log(2e/p)/\varepsilon}$.

The adaptivity complexity of \AdaptiveSampling is
$O(\log(n/\delta)/\log(1/p))$ by \Cref{lem:adaptive_sampling},
\change{so} it follows from the number of iterations $m$ in the binary search 
of \BinarySearch that
the adaptivity complexity of \change{the entire preprocessing step} is
\begin{align*}
  O\parens*{m \cdot \frac{\log(n/\delta)}{\log(1/p)}}
  = O\parens*{\log\log(k) \cdot \frac{\log(n/\delta)}{\log\log(k)}}
  = O\parens*{\log(n/\delta)}.
\end{align*}
Thus, the overall adaptivity of \BinarySearch is
$O(\log(n/\delta) / \varepsilon^{2})$
by Theorem~\ref{thm:approx_factor}.

Now we analyze the expected query complexity of \BinarySearch.
By our choice of $\hat\delta$ and a union bound,
assume all subroutines produce their guaranteed outputs
\change{(i.e., event $Z$ holds for all calls to \EstimateMean)}.
Each call to \AdaptiveSampling in the binary search makes
$O(n/(1-p))$ oracle queries in expectation by \Cref{lem:adaptive_sampling}.
Therefore, the total expected query complexity for the binary search \change{preprocessing} is
\begin{align*}
  O\parens*{m \cdot \frac{n}{1-p}}
  =
  O\parens*{\log\log(k) \cdot \frac{n}{1 - \frac{1}{\log(k)}}}
  =
  O\parens*{n \log\log(k)}.
\end{align*}
Next, since the ratio of the updated interval $[L,U]$ after the binary search is
$O(\log(k))$, it follows that by modifying the search for $\tau^*$
in \Wrapper, the expected query complexity in this stage is
$O\parens*{n \log\log(k) /\varepsilon^3}$
by Theorem~\ref{thm:approx_factor}. 
This term dominates the query complexity of the binary search,
so the result follows.
The approximation factor then holds
by Theorem~\ref{thm:approx_factor} \change{since} the updated interval contains $\tau^*$.
\end{proof}

\subsection{Analysis of \Subsample Algorithm}
\label{sec:subsample_proofs}

\begin{lemma}[Chebyshev's inequality]
\label{lem:chebyshev}
Let $X_1,X_2,\dots,X_{n}$ be independent random variables
with $\E[X_i] = \mu_i$ and $\Var(X_i) = \sigma_i^2$.
Then, for any $a > 0$,
\[
  \Prob{}{\abs*{\sum_{i=1}^n X_i - \sum_{i=1}^n \mu_i} \ge a}
  \le
  \frac{1}{a^2} \sum_{i=1}^n \sigma_i^2.
\]
\end{lemma}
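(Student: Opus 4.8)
The plan is to reduce the claim to the classical Markov inequality applied to the squared deviation of the sum. Write $S = \sum_{i=1}^n X_i$ and $\mu = \E[S] = \sum_{i=1}^n \mu_i$. Since $a > 0$, the event $\abs{S - \mu} \ge a$ coincides with the event $(S-\mu)^2 \ge a^2$, and $(S-\mu)^2$ is a nonnegative random variable, so Markov's inequality gives $\Prob{}{\abs{S-\mu}\ge a} = \Prob{}{(S-\mu)^2 \ge a^2} \le \E[(S-\mu)^2]/a^2$. It then remains only to identify $\E[(S-\mu)^2]$ with $\sum_{i=1}^n \sigma_i^2$.

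For that step I would expand the square as $\E[(S-\mu)^2] = \E[(\sum_{i=1}^n (X_i-\mu_i))^2] = \sum_{i=1}^n \E[(X_i-\mu_i)^2] + \sum_{i \neq j} \E[(X_i-\mu_i)(X_j-\mu_j)]$, using linearity of expectation and the finiteness of $n$. Each diagonal term equals $\Var(X_i) = \sigma_i^2$ by definition, and for $i \neq j$ the independence of $X_i$ and $X_j$ lets me factor $\E[(X_i-\mu_i)(X_j-\mu_j)] = \E[X_i-\mu_i]\,\E[X_j-\mu_j] = 0$. Hence every cross term vanishes, $\E[(S-\mu)^2] = \sum_{i=1}^n \sigma_i^2$, and combining this with the Markov bound above gives exactly the stated inequality.

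There is no serious obstacle here. The only substantive ingredient is that the variance of a sum of independent (in fact, merely pairwise uncorrelated) random variables equals the sum of the individual variances, so the one point requiring care is that independence is invoked only across distinct indices $i \neq j$ when discarding the cross terms. Everything else is the routine passage from Markov's inequality to Chebyshev's inequality by squaring, with $n$ finite so that expectation and sum commute without comment.
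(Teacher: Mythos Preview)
Your argument is correct and is the standard textbook derivation of Chebyshev's inequality: apply Markov's inequality to the nonnegative random variable $(S-\mu)^2$, then use independence to reduce $\Var(S)$ to $\sum_i \sigma_i^2$. There is nothing to fault.

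As for comparison with the paper: the paper does not actually prove this lemma. It is stated as a named, standard result (Chebyshev's inequality) and immediately used in the proof of \Cref{lem:subsample_approx}, with no accompanying argument. So your proposal supplies a proof where the paper simply cites the classical inequality; there is no alternative approach in the paper to compare against.
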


\subsampleApprox*

\begin{proof}
Let $x_1,x_2,\dots,x_k$ be the elements in $S^*$ in lexicographic order.
By summing the marginal gain for each element when they are added in
lexicographic order, we have
\[
  f\parens*{S^*} = \sum_{x \in S^*} \Delta\parens*{x, \pi_x},
\]
where $\pi_x$ denotes the set of elements before $x$ in the
lexicographic order.
Subsample the ground set~$N$ such that each element is included
in the set $N'$ independently with probability $1/\ell$, and
let~$S'$ be the random set denoting the elements in $S^*$ that
remain after subsampling.
It follows from submodularity that for any value that $S'$ takes,
we have
\begin{align*}
  f\parens*{S'} \ge \sum_{x \in S'} \Delta(x, \pi_x).
\end{align*}
For each $x \in S^*$ define the random variable
\begin{align*}
  Z_{x} =
    \begin{cases}
      \Delta(x, \pi_{x}) & \text{with probability $1/\ell$},\\
      0 & \text{otherwise}.
    \end{cases}
\end{align*}
It follows that
\begin{align*}
  \Exp{}{Z_x} &= \Delta(x, \pi_x) \cdot \frac{1}{\ell}\\
  \Var\parens*{Z_x} &= \Delta(x,\pi_x)^2 \cdot \frac{1}{\ell} \parens*{1 - \frac{1}{\ell}}.
\end{align*}
Let $g(S^*)$ be the random variable
\[
  g(S^*) = \sum_{x \in S^*} Z_x,
\]
which is always a lower bound for the optimal solution \change{$\OPT'$} in $N'$.
It follows that
\[
  \Exp{}{g(S^*)} = \frac{1}{\ell} \sum_{x \in S^*} \Delta(x, \pi_x)
  = \frac{1}{\ell} \cdot f(S^*) = \frac{\OPT}{\ell}.
\]

Ultimately, we want to show that with probability at least $1 - \delta$,
we have
\begin{equation}
  \frac{\ell \Delta^*}{\delta} + \ell \cdot \OPT'
  \ge \OPT
  \ge \OPT',
\end{equation}
as this implies the lower and upper bounds
\[
  \frac{\ell}{\delta} \cdot \parens*{\Delta^* + \OPT'}
  \ge \OPT 
  \ge \frac{\OPT' + \Delta^*}{2}.
\]
Consider the probability
\begin{align*}
  \Prob{}{\frac{\ell \Delta^*}{\delta} + \frac{\OPT}{2} + \ell g(S^*) \ge \OPT}
  &=
  \Prob{}{\ell g(S^*) - \OPT \ge - \frac{\ell \Delta^*}{\delta} - \frac{\OPT}{2}}\\
  &=
  \Prob{}{\OPT - \ell g(S^*) \le \frac{\ell \Delta^*}{\delta} + \frac{\OPT}{2}}\\
  &= \Prob{}{\frac{\OPT}{\ell} - g(S^*) \le \frac{\Delta^*}{\delta} + \frac{\OPT}{2\ell}}.
\end{align*}
Using \change{Chebyshev’s inequality (\Cref{lem:chebyshev})}, 
the probability of the complementary event is
\begin{align*}
  \Prob{}{\frac{\OPT}{\ell} - g(S^*)  > \frac{\Delta^*}{\delta} + \frac{\OPT}{2\ell}}
  &\le \Prob{}{\frac{\OPT}{\ell} - g(S^*)  \ge \frac{\Delta^*}{\delta} + \frac{\OPT}{2\ell}}\\
  &\le \Prob{}{\abs*{\frac{\OPT}{\ell} - g(S^*)}  \ge \frac{\Delta^*}{\delta} + \frac{\OPT}{2\ell}}\\
  &\le \frac{1}{\parens*{ \Delta^*/\delta + \frac{\OPT}{2\ell}}^2}
  \cdot \frac{1}{\ell} \parens*{1 - \frac{1}{\ell}}
  \sum_{x \in S^*} \Delta(x, \pi_x)^2\\
  &\le \frac{1}{\parens*{ \Delta^*/\delta + \frac{\OPT}{2\ell}}^2}
  \cdot \parens*{\frac{\ell - 1}{\ell^2}}
  \Delta^* \cdot \sum_{x \in S^*} \Delta(x, \pi_x)\\
  &\le \frac{4 (\ell-1)}{\parens*{2\ell \Delta^*/\delta + \OPT}^2}
  \cdot \Delta^* \cdot \OPT\\
  &\le \frac{4 \ell \Delta^* \OPT}{4\ell^2 (\Delta^*)^2/\delta^2 + 4\ell\Delta^* \OPT/\delta + \OPT^2}\\
  &\le \frac{4 \ell \Delta^* \OPT}{4\ell\Delta^* \OPT/\delta}\\
  &= \delta.
\end{align*}
Therefore, it follows that
\begin{align*}
\Prob{}{\frac{\ell \Delta^*}{\delta} + \frac{\OPT}{2} + \ell g(S^*) \ge \OPT} \ge
  1 - \delta.
\end{align*}

Since \change{for all random outcomes} we have $\OPT' \ge g(S^*)$ and $\OPT \ge \OPT'$,
it follows that 
\begin{align*}
  &\frac{\ell \Delta^*}{\delta} + \ell g(S^*) \ge \frac{\OPT}{2}
  \implies
  \frac{2\ell }{\delta} \parens*{\Delta^* + \OPT'} \ge \OPT \ge \frac{1}{2}\parens*{\Delta^* + \OPT'}.
\end{align*}
Therefore, if we query all marginals to compute $\Delta^*$
and then subsample by $1/\ell$, then with probability at least $1 - \delta$
we know that $\OPT$ lies within an interval of ratio $4\ell/\delta$.
\end{proof}

\reduceRatio*

\begin{proof}
Subsample the ground set $N$ with probability $1/\ell$ to get $N'$.
We decide the value of~$\ell$ later as a function of $R$.
Let $\change{m} = \ceil{\log_{2}(R)}$ and set a smaller
failure probability $\change{\hat\delta_R = \delta/(2(m + 1)\log(R))}$.
For a value $p \in [0, 1)$ that we also set later, run
$\AdaptiveSampling(f,k,\tau,1-p,\change{\hat\delta_R / n})$ on $N'$ in parallel
for the values \change{$\tau_i = 2^i(L/k)$ for $i= 0,1,\dots, m$}.
For each call, we determine if
$\OPT' \le 2k\tau$ or $pk\tau \le \OPT'$
as explained in \Cref{sec:binary-search}.
There are three cases to consider:
\begin{enumerate}
  \item If we have $\OPT' \le 2k\change{\tau_0}$, then $\OPT' \le 2 L$.
  \item If we have $\OPT' \ge pk\change{\tau_m}$, then $\change{\OPT} \in [pU, U]$
    since $\OPT' \le \OPT \le U$.
  \item Otherwise, find the least index \change{$i^*$} such that \change{$\OPT' \ge pk(2^{i^*}L/k)$
    and $\OPT' \le 2k(2^{i^*+1}L/k)$.}
    This implies that \change{ $\OPT' \in [p 2^{i^*} L, 2^{i^*+2}L]$.}
\end{enumerate}
For the first case,
\change{we apply \Cref{lem:subsample_approx}} and
observe that with probability at least \change{ $1 - \hat\delta_R$},
\begin{align*}
  \frac{1}{2}\parens*{\Delta^* + L} &\le \OPT
  \le \frac{2 \ell}{\change{\hat\delta_R}} \parens*{\Delta^* + \OPT'}
  \le \frac{2 \ell}{\change{\hat\delta_R}} \parens*{2\Delta^* + 2L}.
\end{align*}
Therefore, we have a new interval containing $\OPT$ whose
ratio is $4\ell/\change{\hat\delta_R}$.

For the second case,
\change{we have $\OPT \in [pU, U]$,
so the ratio of the new feasible region is $1/p$.}

\change{For the third case, 
it follows from \Cref{lem:subsample_approx}
and the case assumption that,
with probability at least $1-\hat\delta_R$,
\begin{align*}
  \frac{p}{2} \parens*{\Delta^* + 2^{i^*}L}
  \le
  \OPT
  \le
  \frac{8\ell}{\hat\delta_R}\parens*{\Delta^* + 2^{i^*}L}.
\end{align*}
It follows that the ratio of the new feasible region is $16\ell /(p \hat\delta_R)$.}
Thus, in all cases the ratio $R$ maps to a new ratio
of size at most \change{$16\ell /(p \hat\delta_R)$
with probability at least $1 - 2(m+1)\hat\delta_R = 1 - \delta/\log(R)$ by a union bound.}
\change{Setting $\ell = \log^2(R)$ and $p = 1/\log(R)$,
for any $R \ge 100$,
the new ratio is at most
\[
    \frac{16 \ell}{p \hat\delta_R}
    =
    \frac{32 \log^4(R) (m+1)}{\delta}
    \le
    \frac{64 \log^5(R)}{\delta}.
\]
}

Now we assume $R \ge 100$ and \change{analyze} the adaptivity and
query complexity of the \change{ratio} reduction procedure.
The adaptivity is that of
$\AdaptiveSampling(f,k,\tau,1-p,\change{\hat\delta_R/n})$ because
we try all values of \change{$\tau_i$} in parallel.
Thus, by \Cref{lem:adaptive_sampling}
\change{and our choice of $\hat\delta_R$,}
the adaptivity complexity\footnote{\change{Note that the expected size of the ground after subsampling is $n / \ell$.
To make this upper bound deterministic, we can instead randomly permute the ground set and take the first $\ceil{n / \ell}$ elements.
}}
is
\begin{align*}
  O\parens*{\frac{\log\parens*{\frac{\change{n^2}/\ell}{\change{\hat\delta_R}}}}{\log(1/p)}}
  &=
  O\parens*{\frac{\log\parens*{\frac{n/\ell}{\change{\hat\delta_R}}}}{\log(1/p)}} \\
  &=
  O\parens*{\frac{\log\parens*{\frac{n/\log^2(R)}{\change{\hat\delta_R}}}}{\log\log(R)}} \\
  &=
  O\parens*{\frac{\log\parens*{n/\delta}}{\log\log(R)}}.
\end{align*}
Similarly, the expected number of queries is
\begin{align*}
  O\parens*{\change{m} \cdot \frac{n/\ell}{1 - p}}
  &=
  O\parens*{\log(R) \cdot \frac{n/\ell}{1 - p}} \\
  &=
  O\parens*{\log(R) \cdot \frac{\frac{n}{\log^2(R)}}{1 - \frac{1}{\log(R)}}} \\
  &=
  O\parens*{\frac{n}{\log(R)}},
\end{align*}
which completes the proof.
\end{proof}




\linearPreprocess*

\begin{proof}
Assume $\delta > 0$ is a constant failure probability.
\morteza{We start by bounding the number of adaptive rounds. The algorithm \SubsamplePreprocess consists of multiple iterations of the while loop. The first iteration starts with $R = U/L = k$, and we progress to the next iteration with a new value of $R$. We denote the sequence of these values by
$R_0 = k, R_1, \dots, R_{X}$ where $R_X$ is the final ratio of $R = U/L$ for which we complete the iteration of the while loop. In other words, we terminate the while loop after the iteration with $R=R_X$ is complete either on Line~\ref{algline:while-loop} or \ref{algline:loglog-did-not-decrease-enough}.  
Lemma~\ref{lem:reduce_ratio} upper bounds the number of adaptive rounds in iteration $0 \leq j \leq X$ by $O\parens*{\frac{\log\parens*{n/\delta}}{\log\log(R_j)}}$.
Thus, it suffices to upper bound the sum $\sum_{j=0}^X 1/\log\log(R_j)$.
Note that the if condition on Line~\ref{algline:if-condition-loglog-decline} ensures that
\[
    \frac{1}{\log\log(R_j)} \leq \frac{1}{2 \log\log(R_{j+1})}.
\]

The summands form a geometrically increasing series,
so the summation is upper bounded by twice its largest term:
\begin{equation}\label{eq:bound-inverse-loglogR-sum}
    \sum_{j = 0}^X \frac{1}{\log\log(R_j)} \leq \frac{2}{\log\log(R_X)} \leq 1,
\end{equation}
where the last inequality holds because the while condition on Line~\ref{algline:while-loop}
tells us $R_X \ge 2000$.

The expected number of queries can be bounded similarly.
\Cref{lem:reduce_ratio} bounds the expected number of queries in each iteration by $O(n/\log(R))$.
Therefore, we need to upper bound the sum $\sum_{j=0}^X 1/\log(R_j)$ by a constant.
This is evident using \Cref{eq:bound-inverse-loglogR-sum} and the fact that $1/\log(R) \leq 1/\log\log(R)$.

Now let us bound the overall failure probability. In iteration $0 \leq j \leq X$, we set $\hat\delta_R = \delta/(2(m+1) \log(R))$ where $R = R_j$.
We call \AdaptiveSampling $m+1$ times and we may also use the bounds on $\OPT$ stated in Lemma~\ref{lem:subsample_approx} for any of the $m+1$ threshold values $\{\tau_i\}_{i=1}^m$.
We apply the union bound on all of these $2(m+1)$ potential failure events
and conclude that the results of iteration $j$ is reliable (i.e., the run is successful)
with probability at least $1 - \delta/\log(R_j)$. Using the above arguments and \Cref{eq:bound-inverse-loglogR-sum},
the overall failure probability across all iterations is at most:
\[
    \sum_{j=0}^X \frac{\delta}{\log(R_j)} \leq \delta.
\]

Finally, it remains to prove that after running \SubsamplePreprocess, we bound $\OPT$ in the desired constant-ratio range.
Let us define $\alpha = 64/\delta$ and the function $h(R) = \alpha \cdot \log^5(R)$. Using Lemma~\ref{lem:reduce_ratio}, we know that in each iteration we reduce $R$ to some value at most $h(R)$.

If \SubsamplePreprocess terminates after the while condition on Line~\ref{algline:while-loop} fails,
for the final $L$ and $U$, we have $U/L < R^* = 2000$.

Otherwise, we terminate because of the if condition on Line~\ref{algline:if-condition-loglog-decline}.
So, for the final $R$, we have:
\[
\log\log(R) < 2\log\log(h(R)) \implies \log(R) < \log^2(h(R)) = \parens*{\log(\alpha) + 5\log\log(R)}^2.
\]
Therefore,
\[
\log(R) < 4 \max\{\log^2(\alpha), 25\parens*{\log\log(R)}^2\}.
\]
The $\max$ could take each of its two terms values. If the first case occurs, we have $\frac{1}{4}\log(R) < \log^2(\alpha)$. In this case, we have
\[
    \log(R^{1/4}) < \log^2(\alpha)
    \implies
    R^{1/4} < e^{\log^2(\alpha)}
    = \parens*{e^{\log(\alpha)}}^{\log(\alpha)}
    = \alpha^{\log(\alpha)}.
\]
This means $R$ is at most $\alpha^{4\log(\alpha)}$, yielding the upper bound we want. 

If the second case occurs, we have $\log(R) < 100 (\log\log(R))^2$.
The two sides of this inequality do not have the same asymptotic complexity, so the limit of their ratio (RHS divided by LHS) goes to zero as $R$ approaches infinity.
This means there exists a constant $C$ such that for $R > C$, the inequality $\log(R) < 100 (\log\log(R))^2$ cannot hold.
Therefore, $R$ is at most this constant $C$ and consequently $R = O(1)$, which completes the proof.
}
\end{proof}

\section{Missing Analysis from \Cref{sec:submodular-cover}}

\subsection{Analysis of the \WrapperForCover Algorithm}
\label{app:submodular-cover}

\submodularCover*

\begin{proof}
\change{Start by assuming all calls to \EstimateMean give correct
outputs (i.e., event $Z$ occurs).
This happens with probability at least $1-1/n$ by our choice of $\hat\delta$
and a union bound.}
\change{Furthermore, assume}
that $\Delta^* < L$, since if $\Delta^* \ge L$
\change{then the algorithm can trivially output the singleton with the
largest marginal value.}

Next, observe that we have $f(S) \ge L$ upon termination
since we assumed $f$ is integer-valued and
the threshold can eventually reach $\tau < 1$.
To bound the adaptivity complexity, observe that \AdaptiveSamplingForCover
runs in $\change{O(\log(n^2(m+1)))} = O(\log(n\change{(\log(\Delta^*) + 1)}))$ adaptive rounds by
\Cref{cor:threshold-cover} and our choice of $\varepsilon$ and
$\delta$ in Lines~1--3 of \WrapperForCover.
\WrapperForCover calls this subroutine 
$\change{O(m)} = O(\change{\log(\Delta^*) + 1})$ times.
\change{Thus, the}
adaptivity complexity is
$O(\log(n\log(L))\log(L))$ by our initial assumption
\change{$1 \le \Delta^* < L$}.

For the approximation factor of \WrapperForCover,
we begin by mirroring the
analysis of the approximation factor for submodular maximization in
Theorem~\ref{thm:approx_factor}.
Recall the value and marginal gain of the averaged process
$\hat{f}(S_{i})$ and $\hat\Delta(X_i,S_{i-1})$ defined in \Cref{sec:maximization}, and
let $k^* = |S^*|$ denote the size of the optimal set $S^*$.
Call the subsets that are added to~$S$ during the course of the algorithm
$T_1,T_2,\dots,T_m$ and let the remaining gap be $\delta_i = f(S^*) - \hat{f}(S_i)$.
Following the proofs of Lemmas~\ref{lem:conditional_iterative_property}
and \ref{lem:expectation_approx}, and noticing
that $f(S^*)/k^* \le \Delta^*$ by submodularity, for all $i \ge 1$, 
we have
\begin{align}
\label{eqn:dist_to_opt_drop}
  \ExpCond{}{\delta_i}{H_{i-1}(h)} &\le
  \parens*{1 - \frac{(1-\varepsilon)^2}{k^*}}
  \cdot \ExpCond{}{\delta_{i-1}}{H_{i-1}(h)}.
\end{align}
While this expected inequality holds when conditioned on histories
$h=(T_1,T_2,\dots,T_{\ell-1})$, we show how to iterate it so that the
gap $L-f(S)$ decreases geometrically with constant probability.

We start by showing that the size of every subset $T_i$ is at most
$|T_i| \le k^*/(1-\varepsilon)^2 = 4k^*$. Since $f$ is a monotone submodular function
and $\tau$ is reduced by a factor of $(1-\varepsilon)$ each time
\AdaptiveSamplingForCover is called starting from $\tau=\Delta^*$,
it follows from Property 2 of \Cref{cor:threshold-cover} that
\[
  f(S^*) \le f(S) + \sum_{x \in S^*} \Delta(x,S) \le
  f(S) + k^*  \frac{\tau}{1-\varepsilon}.
\]
It follows from Line~13 in \AdaptiveSamplingForCover that an upper bound for $|T_i|$ is 
\[
  \abs*{T_i}
  \le
  \change{\ceil*{\frac{L - f(S)}{(1-\varepsilon)\tau}}}
  \le
  \change{\ceil*{\frac{f(S^*) - f(S)}{(1-\varepsilon)\tau}}}
  \le
  \change{\ceil*{\frac{k^*}{(1-\varepsilon)^2}}}
  =
  4k^*.
\]

Now we consider the progress of reducing the gap $L - f(S)$ after adding
blocks of sets $T_i$.
Define the first block $B_1 = T_1 \cup T_{2} \cup \dots \cup T_{\ell}$
such that $t_1+t_2+\dots+t_\ell \ge 4k^*$ for the least
possible value of~$\ell$.
Similarly, define the blocks $B_2, B_3, \dots$ to be the union of the sets
$T_i$ after the previous block such that the cardinality first exceeds
$4k^*$.
\change{Since} $|T_i| \le 4k^*$, we have the upper bound $|B_i| \le
8k^*$, which we use to ensure that the algorithm processes
sufficiently many blocks.
Since we analyze the algorithm by blocks, it is convenient to let
$S_{B_i} = \bigcup_{j=1}^i B_j$ denote the union of the first~$i$ blocks.
Lastly, observe that
\[
    \Delta(B_{i}, S_{B_{i-1}})
    \le
    4(f(S^*)-f(S_{B_{i-1}})),
\]
for all $i \ge 1$, because the addition of each block never exceeds the
gap \change{$L - f(S_{B_{i-1}})$}
by a factor of more than $1/(1-\varepsilon)^2$
and $L \le f(S^*)$.

By analyzing the algorithm with blocks of size $O(k^*)$, we show that
the addition of each block independently reduces the current gap
$L - f(S_{B_{i-1}})$ by a constant factor with probability $p \ge 0.05$.
This allows us to analyze the expected output size $\E[|S|]$ via
the negative binomial distribution.
Using an analogous block indexing for the gap $\delta_i$,
observe that \Cref{eqn:dist_to_opt_drop}
and \change{$\varepsilon=1/2$ imply that}
\begin{align*}
  \ExpCond{}{\delta_{B_{i}}}{S_{B_{i-1}}}
  &\le \parens*{1 - \frac{\change{(1-\varepsilon)^2}}{k^*}}^{4k^*} \cdot \ExpCond{}{\delta_{B_{i-1}}}{S_{B_{i-1}}} \\
  &\le (1/e) \cdot \ExpCond{}{\delta_{B_{i-1}}}{S_{B_{i-1}}}.
\end{align*}
Since blocks are unions of complete sets $T_j$, the
averaged process $\hat{f}(S_{B_{i}})$ and the true value $f(S_{B_i})$ always agree,
conditioned on the previous state $S_{B_{i-1}}$.
Using the inequality above,
\begin{align*}
  \ExpCond{}{\Delta\parens*{B_i, S_{B_{i-1}}}}{S_{B_{i-1}}}
  &=
  \change{\ExpCond{}{\delta_{B_{i-1}}}{S_{B_{i-1}}}
  -
  \ExpCond{}{\delta_{B_{i}}}{S_{B_{i-1}}}} \\
  &\ge
  \change{\ExpCond{}{\delta_{B_{i-1}}}{S_{B_{i-1}}}
  - (1/e) \cdot \ExpCond{}{\delta_{B_{i-1}}}{S_{B_{i-1}}}}
  \\
  &=
  \parens*{1 - 1/e} \cdot
  \ExpCond{}{\delta_{B_{i-1}}}{S_{B_{i-1}}}.
\end{align*}
This means that the addition of each block $B_i$ decreases the current gap
$L - f(S_{B_{i-1}})$ by a constant factor in expectation.
However, we can make a stronger claim since $\Delta(B_i, S_{B_{i-1}})$ is upper bounded.

Let $X_i$ be the indicator random variable conditioned on $S_{B_{i-1}}$ such that
\[
  X_i = \begin{cases}
    0 & \text{if $\Delta(B_{i},S_{B_{i-1}}) < (1-2/e)\cdot\delta_{B_{i-1}}$,}\\
    1 & \text{otherwise}.\\
  \end{cases}
\]
We claim that $X_i = 1$ with probability $p \ge 0.05$;
otherwise, we would have
\begin{align*}
  \ExpCond{}{\Delta\parens*{B_i, S_{B_{i-1}}}}{S_{B_{i-1}}}
  &< 
  (1-p)(1-2/e)\cdot\ExpCond{}{\delta_{B_{i-1}}}{S_{B_{i-1}}}
   + 4p \cdot \ExpCond{}{\delta_{B_{i-1}}}{S_{B_{i-1}}}\\
  &< (1 - 1/e) \cdot \ExpCond{}{\delta_{B_{i-1}}}{S_{B_{i-1}}},
\end{align*}
which is a contradiction.
Therefore, for each block $B_i$ we have
\[
  \ProbCond{}{\delta_{B_{i}} \le (2/e)\cdot\delta_{B_{i-1}}}{S_{B_{i-1}}} \ge 0.05.
\]
In other words, with probability $p \ge 0.05$, the addition of each block
independently decreases the remaining gap to $f(S^*)$ by a constant factor.

Thus, if after the addition of $\ell$ blocks there are
$a=\ceil{\log(L)/\log(e/2)}$ events such that $X_i = 1$,
then the current gap to $f(S^*)$ satisfies
\[
  \delta_{B_\ell} \le (2/e)^a \cdot \delta_{B_0} \le \frac{1}{L}\cdot \delta_{B_0}.
\]
By the definition of $\delta_{B_i}$ and the assumption that $f$ is nonnegative,
this implies that
\begin{align*}
  f\parens*{S^*} - f\parens*{S_{B_\ell}} \le \frac{1}{L}\cdot f\parens*{S^*}
  \implies L\parens*{1-\frac{1}{L}} = L - 1 \le f\parens*{S_{B_\ell}}.
\end{align*}
\change{Since} we assumed that $f$ is integer-valued,
the algorithm reaches the value lower bound $L$ after the addition of the next item.

It follows that we can upper bound $\E[|S|]$ by the expected number of blocks
needed to have~$a$ successful events plus one more block to ensure that we exceed the
target value $L$ (conditioned on all calls to
\AdaptiveSamplingForCover succeeding, which by our choice of $\delta$
happens with probability at least $1 - 1/n$).
Since each block has at most $8 k^*$ elements,
noticing that this stopping criterion is given by the
negative binomial distribution yields
\begin{align*}
  \Exp{}{\abs*{S}} &\le 8k^*\parens*{1 + \sum_{\ell=0}^\infty \parens*{\ell+a}\binom{\ell+a-1}{\ell}(1-p)^\ell p^a}\\
  &= 8k^* \parens*{1 + a + \sum_{\ell=0}^\infty \ell \binom{\ell+a-1}{\ell}(1-p)^\ell p^a}\\
  &= 8k^* \parens*{1 + a + \frac{(1-p)a}{p}}\\
  &\le 8k^*\parens*{20a + 1}.
\end{align*}
Here we use the fact that the expected value of a negative binomial distribution
parameterized by~$a$ successes and failure probability $1-p$
is $(1-p)a/p$.
Since $a = O(\log(L))$,
it follows that we have the conditional expectation
$\E[|S|] = O(k^* \log(L))$ with probability at least $1 - 1/n$.
Conditioned on the algorithm failing (which happens with probability at most $1/n$),
we have $|S| \le n$.
Therefore, in total we have
\begin{align*}
  \Exp{}{|S|} &= \parens*{1 - 1/n} \cdot O(k^* a) + (1/n)\cdot n
  = O(k^* \log(L)),
\end{align*}
as desired. This completes the proof of the expected approximation factor.
\end{proof}

\end{document}